\documentclass[11pt]{article}
\usepackage{jheppub}
\usepackage{bm,bbm}
\usepackage{booktabs,amsmath}
\usepackage{accents}
\usepackage{multirow}
\usepackage{graphics}
\usepackage{braket}
\usepackage{mathtools}
\usepackage{comment}
\usepackage{autobreak}
\usepackage{subcaption}
\usepackage{enumerate}
\usepackage{blkarray}
\usepackage{longtable}
\usepackage{enumitem}
\usepackage{soul}

\usepackage{tikz}

\usetikzlibrary{patterns}
\usetikzlibrary{arrows,shapes,positioning}
\usetikzlibrary{decorations.markings}
\usetikzlibrary{decorations.pathmorphing}
\tikzset{snake it/.style={decorate, decoration=snake}}

\hypersetup{
    colorlinks=true,
    linkcolor=blue,
    filecolor=magenta,      
    urlcolor=cyan,
    pdfpagemode=FullScreen,
}

\usepackage{arydshln}
\usepackage{mathtools}

\edef\restoreparindent{\parindent=\the\parindent\relax}
\usepackage{parskip}
\restoreparindent
\usepackage{ascmac}
\usepackage{mathrsfs}


\usepackage{amsthm}
\newtheoremstyle{break}
  {\topsep}{\topsep}%
  {\upshape}{}%
  {\bfseries}{}%
  {\newline}{}%
\theoremstyle{break}
\newtheorem{proposition}{Proposition}[section]
\newtheorem{theorem}[proposition]{Theorem}
\newtheorem{corollary}[proposition]{Corollary}

\def\tr{{\rm tr}}
\def\Tr{{\rm Tr}}
\def\d{{\rm d}}
\def\i{{\rm i}}

\def\CC{{\cal C}}

\def\CE{{\cal E}}

\def\CH{{\cal H}}

\def\BC{\mathbb{C}}
\def\BF{\mathbb{F}}
\def\BN{\mathbb{N}}

\def\BQ{\mathbb{Q}}
\def\BR{\mathbb{R}}

\def\BZ{\mathbb{Z}}

\def\Ba{\mathbf{a}}
\def\Bb{\mathbf{b}}
\def\Bc{\mathbf{c}}
\def\Bu{\mathbf{u}}
\def\Bv{\mathbf{v}}


\def\d{\mathrm{d}}

\def\SU{\mathrm{SU}}

\def\U{\mathrm{U}}





\setcounter{MaxMatrixCols}{12}

\title{
Narain CFTs from nonbinary stabilizer codes}

\author[a,b]{Yasin Ferdous Alam,} 
\author[a,c]{Kohki Kawabata,}
\author[a]{Tatsuma Nishioka,}
\author[d]{Takuya Okuda}
\author[c]{and Shinichiro Yahagi}

\affiliation[a]{Department of Physics, Osaka University,\\
Machikaneyama-Cho 1-1, Toyonaka 560-0043, Japan}

\affiliation[b]{Physics Department, University of Texas, Austin,\\
Austin TX 78712, USA}

\affiliation[c]{Department of Physics, Faculty of Science,
The University of Tokyo,\\
Bunkyo-Ku, Tokyo 113-0033, Japan}

\affiliation[d]{Graduate School of Arts and Sciences, The University of Tokyo, Komaba,\\
Meguro-ku, Tokyo 153-8902, Japan}


\preprint{OU-HET-1193, UT-Komaba/23-8}

\abstract{
We generalize the construction of Narain conformal field theories (CFTs) from qudit stabilizer codes to the construction from quantum stabilizer codes over the finite field of prime power order ($\mathbb{F}_{p^m}$ with $p$ prime and $m\geq 1$) or over the ring $\mathbb{Z}_k$ with $k>1$.
Our construction results in rational CFTs, which cover a larger set of points in the moduli space of Narain CFTs than the previous one.
We also propose a correspondence between a quantum stabilizer code with non-zero logical qubits
and a finite set of Narain CFTs.
We illustrate the correspondence with well-known stabilizer codes.
}

\begin{document} 
\maketitle
\flushbottom

\section{Introduction}
Narain conformal field theories (CFTs) are a class of two-dimensional CFTs describing the geometry of the spacetime in string theory \cite{Narain:1985jj,Narain:1986am}.
They are characterized by a set of vertex operators whose left- and right-moving momenta span so-called Narain lattices. 
Narain CFTs are bosonic, non-chiral, and modular invariant when the lattices are even, Lorentzian, and self-dual, respectively.
While they are simple theories of free compact bosons specified by the lattices, they have a rather large continuous moduli space and exhibit a rich structure such as symmetry enhancement and dualities \cite{Polchinski:1998rq}.

Narain CFTs have received renewed interest in recent years due to the revelation of their relation to quantum error-correcting codes (QECs) \cite{Dymarsky:2020qom}.
This relation is based on the novel construction of even self-dual Lorentzian lattices from qubit stabilizer codes \cite{Gottesman:1996rt}, which extends Construction A of even self-dual Euclidean lattices from classical error-correcting codes \cite{conway2013sphere}.
The CFTs whose Narain lattices are the even self-dual Lorentzian lattices built from qubit stabilizer codes are named Narain code CFTs.
Recently, there has been significant progress in the field of Narain code CFTs.
Generalizations have been made from qubit (binary) to qudit (non-binary) stabilizer codes \cite{Kawabata:2022jxt}, and a family of code CFTs have been constructed from quantum Calderbank-Shor-Steane (CSS) codes~\cite{Calderbank:1995dw,Steane:1996va}.
There are also various applications of Narain code CFTs, including the modular bootstrap \cite{Dymarsky:2020bps,Henriksson:2022dnu,Dymarsky:2022kwb}, the exploration of CFTs with large spectral gaps \cite{Furuta:2022ykh,Angelinos:2022umf}, and their relevance in the context of holographic duality \cite{Dymarsky:2020pzc} based on ensemble average of Narain CFTs \cite{Maloney:2020nni,Afkhami-Jeddi:2020ezh,aharony2023holographic}.
See also \cite{Gaiotto:2018ypj,Buican:2021uyp,Henriksson:2022dml,Yahagi:2022idq,Kawabata:2023nlt} for relevant works.

In this paper, we further extend the constructions of Narain code CFTs from quantum stabilizer codes over $\BF_p$ with $p$ prime \cite{Dymarsky:2020qom,Kawabata:2022jxt} to those from quantum stabilizer codes over finite field $\BF_{p^m}$ with $p$ prime and $m\ge 1$ and over the ring $\BZ_k$ with $k>1$.
Following the approach adopted in previous studies, we proceed to establish Narain code CFTs by constructing them from quantum codes in two steps:
\begin{enumerate}
    \item Map quantum stabilizer codes to their associated classical codes.
    \item Lift the resulting classical codes to even self-dual Lorentzian lattices to Narain code CFTs.
\end{enumerate}
The first step follows immediately from the result of \cite{ashikhmin2001nonbinary} for finite field $\BF_{p^m}$ and those of \cite{nadella2012stabilizer,guenda2014quantum} for finite (Frobenius) rings.
In the second step, we can leverage the construction of Narain code CFT from classical codes over $\BF_{p^m}$ in \cite{Yahagi:2022idq} for the finite field case.
To our best knowledge, however, a similar construction appears to be missing in the literature. 
Focusing on quantum stabilizer codes over $\BZ_k$, we use Construction A to lift the associated classical codes to lattices and determine the condition when the resulting lattices become even self-dual (Theorem \ref{theorem:self_duality_ring_lattice}).
Moreover, we show that certain CSS codes always satisfy the condition (Proposition \ref{prop:CSS_ring}) and yield a broader class of Narain code CFTs than those in \cite{Dymarsky:2020qom,Kawabata:2022jxt}.
For example, focusing on CSS codes associated with classical self-dual codes, Narain code CFTs constructed from the codes over $\BF_p$ with prime $p$ always have even central charges irrespective of the code length.
On the other hand, Narain code CFTs constructed from the CSS codes associated with classical self-dual codes over $\BZ_k$ with $k$ odd and over $\BF_{p^m}$ with $m$ odd can have odd central charges when the code length is odd.\footnote{Quantum stabilizer codes of non self-dual CSS type can yield Narain code CFTs with odd central charges \cite{Dymarsky:2020pzc,Kawabata:2022jxt}.}
Thus, our construction covers a larger (discrete) set of points in the moduli space of Narain CFTs than before and paves the way toward classifying the ensembles of Narain CFTs whose properties can be understood from those of the associated quantum codes.\footnote{
The larger moduli space covered by our Narain CFTs allows for a wider class of ensembles to average through a finer grain ($\BF_{p^m}$ and $\BZ_k$ as opposed to just $\BF_p$). 
In the limit of large order $q$ for Narain code CFTs constructed over a ring $R_q$, the ensemble approaches that of all Narain CFTs \cite{Angelinos:2022umf,Kawabata:2022jxt}. Averaging over an ensemble of Narain CFTs is holographically dual to Chern-Simons theories and taking the large order limit is expected to be equivalent to three-dimensional $\U(1)$-gravity \cite{Maloney:2020nni,Afkhami-Jeddi:2020ezh}. 
}

With the general construction in hand, we examine quantum stabilizer codes of short length and explore Narain code CFTs with small central charges.
We show that Narain code CFTs from stabilizer codes of length one over $\BZ_{k}$ give rise to a single compact boson theory whose radius square can be any rational number, $R^2 = 2m_1/m_2$ by choosing $k=m_1\,m_2$ where $m_1,m_2$ are coprime positive integers.
Since this construction is believed to exhaust all rational compact boson theories \cite{Ginsparg:1987eb,DiFrancesco:1997nk,Gukov:2002nw} by varying $m_1$ and $m_2$, we succeed in realizing
 all rational points in the moduli space of Narain CFTs with central charge one by our construction.\footnote{Any $c=1$ Narain CFT can be obtained from a code CFT by acting an appropriate $\text{O}(1,1, \BR)$ transformation \cite{Angelinos:2022umf}, while our construction provides all $c=1$ rational Narain CFTs without relying on such a transformation.}
We also consider a few examples of stabilizer codes over $\BZ_k$ and over $\BF_{p^m}$ of length two and construct their Narain code CFTs with central charge two.
We determine representatives of the metric and anti-symmetric tensor describing the geometry of the target space in Narain CFTs, and show explicitly that the Narain code CFTs with central charge two we consider as examples are rational based on the criteria given in \cite{Gukov:2002nw}.
Note that there are Narain code CFTs that are not rational as exemplified by \cite{Dymarsky:2021xfc}.
We show that the Narain code CFTs we construct in this paper are rational as their partition functions can be written as a finite sum of a product of characters.
Incidentally, a recent paper \cite{Furuta:2023xwl} obtains sufficient conditions on the metric and B-field for Narain CFTs to be rational and have a code construction, which is also derived from the character decomposition of the torus partition functions. Our results are consistent with those of \cite{Furuta:2023xwl}.

Another point of interest in this paper is the extension of the one-to-one correspondence between (binary) quantum stabilizer codes and Narain code CFTs to a one-to-many correspondence.
In the former correspondence, the relevant stabilizer code always has zero logical qudits.
Practically useful stabilizer codes, and CSS codes in particular, have non-zero logical qubits.
We show that a CSS code with a non-zero number of qubits can be naturally associated with a set of several Narain code CFTs.
We illustrate the one-to-many correspondence with two famous CSS codes, the Shor code~\cite{PhysRevA.52.R2493} and the Steane code \cite{steane1996multiple}.

This paper is organized as follows.
In section \ref{ss:review_stabilizer}, we review the stabilizer formalism of quantum codes over finite Frobenius rings with emphasis on the profound relation bridging between quantum stabilizer codes and classical codes.
We also introduce CSS codes which provide systematic constructions of quantum stabilizer codes from a pair of classical codes satisfying certain conditions.
In section \ref{ss:review_narain}, 
we review the description of a Narain CFT in terms of the momentum lattice and also review the rationality conditions for theories with small central charges.
In section \ref{ss:ringcodes}, we focus on CSS codes over $\BZ_k$ and construct their Narain code CFTs through Construction A of the Lorentzian lattices.
We also examine several Narain code CFTs with central charges one and two, and study their rationality.
In section \ref{ss:finite_field}, we turn to CSS codes over $\BF_{p^m}$ and the associated Narain code CFTs.
A few examples are worked out and shown to reduce to known CFTs.
In section \ref{sec:non-zero-qubits}, we switch gears and propose an extended correspondence between a quantum stabilizer code with non-zero logical qubits and a set of Narain CFTs.
Section \ref{ss:discussion} concludes with the discussion and the implications of our results.

\section{Quantum stabilizer codes over rings}\label{ss:review_stabilizer}

The stabilizer formalism of quantum error-correcting codes was originally introduced in the binary case \cite{Gottesman:1996rt} and it has been generalized to finite fields \cite{knill1996group,knill1996non,rains1999nonbinary} and finite Frobenius rings \cite{klappenecker2012nice,nadella2012stabilizer,guenda2014quantum}.
In this section, we give a brief review of finite Frobenius rings following \cite{Wood2011APPLICATIONSOF,dougherty2017algebraic} and describe the formulation of quantum stabilizer codes over Frobenius rings. We also emphasize the relationship between quantum stabilizer codes and classical codes and introduce Calderbank-Shor-Steane (CSS) codes as a special type of stabilizer code.

Let us define a finite ring of our interest.
A ring is a set $R$ with addition $(+)$ and multiplication $(\cdot)$ satisfying the following conditions: $(a\in R)$
\begin{itemize}
    \item $(R,+)$ is closed, associative, and commutative. 
    \item $R$ contains additive identity $0$ s.t. $a+0 =a$ and inverse elements $-a$ s.t. $a+(-a) = 0$.
    \item $(R,\cdot)$ is closed, associative, and contains multiplicative identity $1$ s.t. $1\cdot a = a$.
    \item Multiplication is distributive under addition.
\end{itemize}
In particular, a ring is called finite if it has a finite number of elements.

Let $R$ be a finite ring and introduce a character of its module.
For a finite abelian group $G$, a character is a group homomorphism $\chi: G\to \BC^\times$ and the set of all characters form a group $\hat{G} = \mathrm{Hom}(G,\BC^\times)$ called a character group of $G$.
Since we can see a finite ring $R$ as an additive abelian group $(R,+)$, we can similarly define a character of a finite left $R$-module $A$.
In this case, $\hat{A}$ consists of characters of an additive group $A$ and it becomes an right $R$-module called a character module: $(\chi r)(a):=\chi(ra)$ for $\chi\in \hat{A}$, $r\in R$, and $a\in A$. In the same way, we can formulate a character of a right $R$-module $A$ such that $(r \chi )(a):=\chi(ar)$.

In particular, for $A = R$, $R$ is a left and right $R$-module and we can define a character of a finite ring $R$. Then, a character $\chi\in\hat{R}$ induces a left and right homomorphism $R\to\hat{R}$ such that $R\ni r\mapsto \chi r\in\hat{R}$ and $R\ni r\mapsto r \chi \in\hat{R}$, respectively.
We call $\chi\in \hat{R}$ a left (resp., right) generating character if $\hat{R} = \{r \chi \in \hat{R} \mid r\in R\}$ (resp., $\hat{R} = \{\chi r\in \hat{R} \mid r\in R\}$).

Frobenius rings are a special type of finite ring, which can be well characterized by the existence of a generating character. It is known that for a finite ring $R$, the followings are equivalent (\cite{hirano1997admissible,wood1999duality}):
\begin{itemize}
    \item $R$ is a Frobenius ring.
    \item $R$ admits a left generating character.
    \item $R$ admits a right generating character.
\end{itemize}
Moreover, if these conditions are met, every left (right) generating character is also a right (left) generating character.
Therefore, a Frobenius ring $R$ always has a generating character.

We give three examples of finite Frobenius rings and their generating characters.
\begin{enumerate}
    \item For a finite field of order $q=p^m$ denoted by $\BF_q$, a generating character is given by $\chi(a) = \exp(2\pi\i \,\tr (a)/p)$ where $a\in\BF_q$, and $\tr$ is the trace map $\BF_q\to \BF_p$.
    \item For a ring of integers modulo $q$ denoted by $\BZ_q$, a generating character is $\chi(a) = \exp(2\pi\i\, a/q)$ where $a\in\BZ_q$.
    \item For a direct sum of Frobenius rings $R_i$ $(i=1,2,\cdots,n)$ with a generating character $\chi_i$, a generating character is $\chi = \prod_{i=1}^n\chi_i$.
\end{enumerate}

Let $R$ be a finite Frobenius ring with $q$ elements and $B= \{\,\ket{x}\,|\, x\in R\,\}$ be an orthonormal basis of $\BC^q$.
For $a, b\in R$, Pauli-like operators $X(a), Z(b)$ are defined by
\begin{align}
    X(a) \ket{x} = \ket{x + a} \ , \qquad Z(b)\ket{x} = \chi(bx)\,\ket{x} \ ,
\end{align}
where $\chi\in \mathrm{Hom}(R,\BC^\times)$ is a character of the additive abelian group $(R, +)$.
For each character $\chi$,
we have a unique function $\psi\in\mathrm{Hom}(R,\BQ/\BZ)$ such that
\begin{align}
    \chi(x) = e^{2\pi\i\,\psi(x)} \ .
\end{align}
Then, a set of linear operators acting on $\BC^q$ is given by
\begin{align}
    \CE = \left\{\, X(a)\,Z(b)\, |\, a, b \in R\,\right\} \ .
\end{align}
A Frobenius ring $R$ guarantees that the set $\CE$ forms an orthonormal basis with respect to the normalized Hilbert-Schmidt inner product $\langle A|B\rangle = \frac{1}{q}\tr(A^\dag B)$ \cite{klappenecker2012nice}. Here $\tr$ denotes the trace of a matrix and $A^\dag$ is the adjoint of the operator $A$.

We can easily generalize this representation to the $n$ quantum system. An orthonormal basis of $\BC^{q^n} = \BC^q\otimes\cdots\otimes\BC^q$ is the $n$-fold tensor products of $B = \{\,\ket{x}\mid x\in R\,\}$.
Let $\Ba = (a_1, \cdots, a_n)\in R^n$, $\Bb = (b_1, \cdots, b_n)\in R^n$, and define the linear operators
\begin{align}
    \begin{aligned}
        X(\Ba) \equiv X(a_1)\otimes \cdots \otimes X(a_n) \ , \qquad
        Z(\Bb) \equiv Z(b_1)\otimes \cdots \otimes Z(b_n) \ .
    \end{aligned}
\end{align}
We also define the error group by
\begin{align}
    G_n = \left\{\,\omega^\kappa\,X(\Ba)\,Z(\Bb)\,|\, \Ba, \Bb \in R^n, \kappa \in \BZ \, \right\}\ ,
\end{align}
where $\omega = \exp{(2\pi \i/p)}$ and $p$ the exponent of the additive group $(R, +)$.

To simplify the notation, we denote an error operator $g(\Bu)$ by
\begin{align}
    g(\Bu) \equiv X(\Ba)\,Z(\Bb) \qquad \text{for}\quad \Bu = (\Ba\,|\, \Bb)  \in R^{2n} \ .
\end{align}
For $\Bu = (\Ba\,|\, \Bb) \in R^{2n}$ and $\Bv = (\Ba'\,|\, \Bb') \in R^{2n}$,
the product of two error operators satisfy $g(\Bu)\, g(\Bv) = e^{2\pi\i\,\psi(\Bb\cdot \Ba')}\,g(\Bu + \Bv)$ and
\begin{align}
\label{two_error_prod}
    g(\Bu)\, g(\Bv) = e^{2\pi\i\,\psi(\Bb\cdot\Ba'-\Bb'\cdot\Ba)}\,g(\Bv)  g(\Bu)\ .
\end{align}
Then, the two error operators commute if and only if $\psi(\Bb\cdot\Ba'-\Bb'\cdot\Ba) = 0$.

For an abelian subgroup $S$ of $G_n$, a quantum stabilizer code $V_S$ is defined by a subspace of $\BC^{q^n}$ fixed by every element in the abelian subgroup $S$
\begin{align}
    V_S = \left\{\, \ket{\psi} \in \BC^{q^n}\;\middle|\; g\,\ket{\psi} = \ket{\psi}~\text{for}~\forall g\in S\,\right\} \ .
\end{align}
Here, the abelian group $S$ is called the stabilizer group of $V_S$.
The projector on code subspace $V_S$ is given by
\begin{align}
    P_S = \frac{1}{|S|} \sum_{g\,\in\, S} g\ ,
\end{align}
and we can find the dimension of $V_S$ by tracing over the projector. We find from \cite{GHEORGHIU2014505}
\begin{align}
    \dim{V_S} &= K = \frac{1}{|S|}\,q^n \,.
\end{align}

While stabilizer operators leave states in the code subspace $V_S$ invariant, there exist elements of the Pauli group that change one state in the code subspace into another but leave the whole code subspace unchanged.
These operators are not detectable and are called logical operators.
As the code subspace is invariant under the action of logical operators, they commute with any element in the stabilizer group $S$.
Therefore, the set of logical operators can be written as $N(S)\backslash S$ where $N(S)$ is the normalizer of the stabilizer group
\begin{align}
    N(S) = \left\{\,g \in G_n \;\middle|\; g\,s\,g^\dag \in S,\, \forall s \in S \,\right\}\,.
\end{align}
On the other hand, the operators outside the normalizer $N(S)$ do change a state in the code subspace $V_S$ into a state that is not present in $V_S$, so such an action is detectable and called an error operator. Clearly, the set of error operators is given by $G_n\backslash N(S)$. 

Let us characterize the error-correcting property of quantum stabilizer codes.
Defining the weight $\text{wt}(g)$ of an operator $g \in G_n$ is the number of qudits on which it acts non-trivially, the distance $d$ of a stabilizer code is
\begin{align}
    d=\min_{g\,\in\, N(S) \backslash S} \text{wt}(g)\ .
\end{align}
The distance gives a measure of the error-correcting capabilities of the code. A code with distance $d$ can detect errors of weight $d-1$ and correct errors of weight $\lfloor{(d-1)/2}\rfloor$.
Conventionally, a $K$-dimensional subspace of $\BC^{q^n}$ with distance $d$ is called an $((n,K,d))_{q}$ quantum stabilizer code.

We will make use of
an intriguing relationship between quantum stabilizer codes and classical codes~\cite{Calderbank:1996aj,Calderbank:1996hm,knill1996non,nadella2012stabilizer}.
For a finite Frobenius ring $R$, a classical code $C$ is defined as an additive subgroup of $(R^{2n}, +)$.
Associated with a stabilizer group $S\subset G_n$, we have a classical code
\begin{align}
    \CC = \left\{\Bu = (\Ba\,|\, \Bb)\in R^{2n}\;\middle|\;g(\Bu)\in S\,\right\}\,.
\end{align}
As the stabilizer group $S$ forms a group under multiplication, correspondingly, the classical code $\CC\subset  R^{2n}$ is closed under addition.

To establish the relationship between stabilizer codes and classical codes, it is natural to introduce the form $\langle *,*\rangle_\chi :R^{2n}\times R^{2n}\to \BQ/\BZ$
\begin{align}\label{innerprod}
    \langle \Bu, \Bv\rangle_\chi \equiv \psi(\Bb\cdot \Ba' - \Bb'\cdot \Ba) \ , \qquad \text{for}\quad \Bu = (\Ba\,|\, \Bb)\ , \quad \Bv = (\Ba'\,|\, \Bb') \ .
\end{align}
For a Frobenius ring $R$ with a generating character $\chi$, the form $\langle \Bu, \Bv\rangle_\chi$ becomes $\BZ$-bilinear and left- and right-nondegenerate \cite{nadella2012stabilizer}. Then, $\langle *,*\rangle_\chi$ is a symplectic bilinear form.

Associated with the symplectic bilinear form, the dual code $\CC^{\perp_s}$ is defined by\footnote{One can define another dual code by
\begin{align}
    {}^{\perp_s} \CC \equiv \left\{\, \Bu \in R^{2n}\,|\, \langle \Bu, \Bv\rangle_\chi = 0 ~ \text{for}~\forall \Bv\in \CC\,\right\} \ .
\end{align}
}
\begin{align} \label{Cperp}
    \CC^{\perp_s} \equiv \left\{\, \Bu \in R^{2n}\,|\, \langle \Bv, \Bu\rangle_\chi = 0 ~ \text{for}~\forall \Bv\in \CC\,\right\} \ .
\end{align}
For a Frobenius ring $R$ with $q$ elements, the cardinality of the dual code is given by $|\CC^{\perp_s}| = |R^{2n}|/|\CC| = q^{2n}/|\CC|$.
We call $\CC$ self-orthogonal when $\CC \subset \CC^{\perp_s}$.
For a self-orthogonal code $\CC$, any pair of codewords $\Bu, \Bv \in \CC$ is orthogonal, $\langle \Bu, \Bv\rangle_\chi = 0$. We call $\CC$ self-dual if $\CC=\CC^{\perp_s}$. In this case, $|\CC| =|\CC^{\perp_s}| = q^n$.

By the abelian property of a stabilizer group $S$, any pair of codewords $\Bu, \Bv \in \CC$ is orthogonal with respect to the symplectic bilinear form.
Accordingly, a classical code $\CC$ constructed from a stabilizer group $S$ is always self-orthogonal.
In the opposite direction, once a self-orthogonal code is given, one can easily construct a commuting set of operators.
Hence, we have the following proposition.

\begin{proposition}[{\cite[Theorem 7]{nadella2012stabilizer}}]
    Let $R$ be a finite Frobenius ring. A stabilizer group $S\subset G_n$ exists if and only if there exists a classical additive code self-orthogonal with respect to the symplectic bilinear form $\langle *,*\rangle_\chi$.
    \label{prop:additive_stab}
\end{proposition}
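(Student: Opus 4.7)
The plan is to work with the phase-forgetting map $\pi: G_n \to R^{2n}$ defined by $\pi(\omega^\kappa\,g(\Bu)) = \Bu$. The multiplication rule $g(\Bu)\,g(\Bv) = e^{2\pi\i\,\psi(\Bb\cdot\Ba')}\,g(\Bu+\Bv)$ noted just above \eqref{two_error_prod} shows that $\pi$ is a group homomorphism from $(G_n,\cdot)$ to $(R^{2n},+)$, with kernel the scalar subgroup $\langle\omega\rangle$. The equivalence in the proposition then reduces to comparing abelian subgroups of $G_n$ with self-orthogonal additive subgroups of $R^{2n}$ through $\pi$.

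For the ($\Rightarrow$) direction, assume $S \subset G_n$ is an abelian subgroup and set $\CC = \pi(S)$. Then $\CC$ is additive because it is the image of a group homomorphism. To verify self-orthogonality, pick $\Bu, \Bv \in \CC$ with lifts $h_1, h_2 \in S$; the relation $h_1 h_2 = h_2 h_1$ combined with \eqref{two_error_prod} gives $e^{2\pi\i\,\psi(\Bb\cdot\Ba' - \Bb'\cdot\Ba)} = 1$, i.e., $\langle \Bu,\Bv\rangle_\chi = 0$. Hence $\CC \subset \CC^{\perp_s}$.

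For the converse, assume $\CC$ is self-orthogonal and let $S$ be the subgroup of $G_n$ generated by $\{\,g(\Bu) : \Bu \in \CC\,\}$. Iterating the multiplication rule and using additivity of $\CC$, every element of $S$ has the form $\omega^\kappa\,g(\Bu)$ for some $\Bu \in \CC$ and some integer $\kappa$. Two such elements commute iff $g(\Bu)$ and $g(\Bv)$ commute, which by \eqref{two_error_prod} holds precisely when $\langle \Bu, \Bv\rangle_\chi = 0$; this is the self-orthogonality hypothesis. Therefore $S$ is an abelian subgroup of $G_n$, yielding a stabilizer group whose image under $\pi$ is $\CC$.

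The main subtlety will be the reverse direction, where the scalar phases $\omega^\kappa$ arising from iterated multiplication must be tracked as genuine elements of $S$, and one has to know that the $\BQ/\BZ$-valued $\psi$ makes the identification between $\CC$ and $S$ consistent. The Frobenius property of $R$, guaranteeing a generating character and the nondegeneracy of $\langle *,*\rangle_\chi$, is exactly what makes this packaging work; this is essentially the content of Theorem 7 in \cite{nadella2012stabilizer}, which can be viewed as identifying stabilizer groups with central extensions of self-orthogonal additive codes by $\langle\omega\rangle$.
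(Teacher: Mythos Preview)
Your argument is correct and matches the paper's approach. The paper does not give a self-contained proof of this proposition; it only sketches both directions in the two sentences immediately preceding the statement (``By the abelian property\ldots'' and ``In the opposite direction, once a self-orthogonal code is given, one can easily construct a commuting set of operators'') and defers to \cite[Theorem 7]{nadella2012stabilizer}. Your write-up via the phase-forgetting homomorphism $\pi$ is a faithful and more detailed expansion of exactly that sketch.

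One minor remark: your closing paragraph suggests that the Frobenius hypothesis (nondegeneracy of $\langle *,*\rangle_\chi$) is what makes the construction of $S$ from $\CC$ go through. In fact the argument you gave for both directions uses only \eqref{two_error_prod} and the additivity of $\CC$, and would work for any character $\chi$; the Frobenius property enters elsewhere in the paper (e.g., the cardinality formula $|\CC^{\perp_s}|=q^{2n}/|\CC|$ and the orthonormality of the error basis $\CE$), not in this particular equivalence. This does not affect the validity of your proof.
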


There is an important class of quantum stabilizer codes defined by a pair of classical linear codes, which we will heavily use in later sections. 
For a finite Frobenius ring $R$, a left (right) \textit{linear} code $C$ of length $n$ is a left (right) submodule of $R^n$.
We introduce the standard Euclidean inner product $\cdot$ that maps $R^n\times R^n$ into $\BQ/\BZ$.
For a left linear code $C$, the dual code with respect to the Euclidean inner product can be defined as (\cite{wood1999duality})
\begin{align} \label{eq:dual_code_Euclidean}
    C^\perp = \left\{\,\Bb\in R^n\;\middle|\; \Ba\cdot \Bb = 0\,,\;\Ba\in C\,\right\}\,.
\end{align}
In analogy with the symplectic bilinear form $\langle *,*\rangle_\chi$, we call $C$ self-orthogonal if $C\subset C^\perp$ and self-dual if $C=C^\perp$.

Let $C_X$ and $C_Z$ be classical linear codes such that $C_Z^\perp \subset C_X$.
To construct a stabilizer code, let us introduce a classical code $\CC=C_X^\perp\times C_Z^\perp\in R^{2n}$
\begin{align}
\label{eq:css_frobenius}
    \CC = \left\{\,\Bu = (\Bc_x\,|\,\Bc_z)\in R^{2n}\; \middle| \; \Bc_x\in C_X^\perp\,,\;\Bc_z\in C_Z^\perp\,\right\}\,.
\end{align}
For codewords $\Bu=(\Bc_x\,|\,\Bc_z)$, $\Bv=(\Bc_x'\,|\,\Bc_z')\in\CC$, we have
\begin{align}
    \Bc_z\cdot \Bc_x' - \Bc_z'\cdot \Bc_x = 0\,.
\end{align}
Therefore, the symplectic bilinear form vanishes: $\langle \Bu,\Bv\rangle_\chi=0$, so $\CC = C_X^\perp\times C_Z^\perp$ is self-orthogonal with respect to the symplectic bilinear form.
Then, Proposition \ref{prop:additive_stab} tells us that we can construct a stabilizer group
\begin{align}
    S = \left\{\,g(\Bu)\in G_n\mid \Bu = (\Bc_x\,|\,\Bc_z)\in C_X^\perp\times C_Z^\perp\,\right\}\,.
\end{align}
This construction was first given by Calderbank, Shor, and Steane \cite{calderbank1996good,steane1996multiple}, so such a stabilizer code is called CSS type. After the development in the binary case, it was generalized into a finite Frobenius ring \cite{guenda2014quantum}.

\section{Review of Narain CFTs}\label{ss:review_narain}
The purpose of this section is to review the basics of Narain CFTs and their lattice structure~\cite{Narain:1985jj,Narain:1986am}, while summarizing our conventions.
We also review the rationality of Narain CFTs with a small central charge.
The reader familiar with Narain CFTs can skim this section and skip to the next sections.

\subsection{Momentum lattices}
Narain CFTs are two-dimensional theories of $n$ compact bosons $X^\mu(z,\bar z)$ whose target space is the $n$ torus \cite{Narain:1985jj,Narain:1986am}:
\begin{align}
    X^\mu \simeq X^\mu + 2\pi R\ , \qquad \mu = 1,\cdots, n \ .
\end{align}
The geometry of the $n$ torus depends on the metric $G_{\mu\nu}$ and the anti-symmetric tensor $B_{\mu\nu}$.
Let $e_\mu^{~i}$ be a vielbein with tangent space index $i$ ($i=1,\cdots, n$) defined by
\begin{align}\label{vielbein}
    G_{\mu\nu} = e_\mu^{~i}\,e_\nu^{~j}\,\delta_{ij} \ .
\end{align}
We introduce the dimensionless fields $X^{(i)} \equiv \sqrt{\frac{2}{\alpha'}}\,e_\mu^{~i}\,X^\mu$, and decompose them into the left- and right-moving parts,
$X^{(i)}(z, \bar z) = X^{(i)}_L(z) + X^{(i)}_R(\bar z)$.
The left- and right-moving momenta are parametrized by integer vectors $l_\mu, w^\mu \in \BZ$ as \cite{Polchinski:1998rq}
\begin{align}\label{Narain_momentum_vectors}
    \begin{aligned}
        (p_{L})_i
            &=
                \tilde e_i^{~\mu}\,\left[\, \frac{l_\mu}{r} + \frac{r}{2}\,(B+G)_{\mu\nu}\,w^\nu\right]\ ,\\
        (p_{R})_i
            &=
                \tilde e_i^{~\mu}\,\left[ \,\frac{l_\mu}{r} + \frac{r}{2}\,(B-G)_{\mu\nu}\,w^\nu\right]\ .
    \end{aligned}    
\end{align}
Here, $\tilde e_i^{~\mu}$ is the inverse of the vielbein, i.e., $\tilde e_i^{~\mu}e_\mu^{~j} = \delta_i^j$ and $r \equiv R\,\sqrt{\frac{2}{\alpha'}}$ is the dimensionless radius.

Let $\eta_0$ be the inner product for momentum vectors $P = (p_L, p_R)$ defined by 
\begin{align}
    \eta_0 (P, P') \equiv p_L \cdot p_L' - p_R \cdot p_R' \ ,
\end{align}
where $\cdot$ is the standard Euclidean inner product.
The set of the momentum vectors satisfying \eqref{Narain_momentum_vectors} forms a lattice $\Gamma$ in $\BR^{n,n}$.
Narain CFTs are bosonic if $\Gamma$ is an even lattice, i.e., for any vector $P\in\Gamma$,
\begin{align}
    \eta_0(P, P) \in 2\,\BZ \ .
\end{align}
The dual lattice $\Gamma^\ast$ with respect to the metric $\eta_0$ is defined by
\begin{align}
    \Gamma^\ast = \left\{\, P \in \BR^{n,n}\, \bigg| \, ~\eta_0(P, P') = \BZ \ , ~ P'\in \Gamma\, \right\}\ .
\end{align}
Narain CFTs are modular invariant if $\Gamma$ is a self-dual lattice, i.e., $\Gamma = \Gamma^\ast$.
In what follows, we consider Narain CFTs with even self-dual momentum lattices.

To any vector $P\in \Gamma$ in the momentum lattice, we associate the vertex operator by
\begin{align}
    V_{p_L,p_R}(z,\bar{z}) = : e^{\i\,p_L\cdot X_L(z) + \i\,p_R \cdot X_R(\bar{z})} : \ .
\end{align}
The vertex operator corresponds to the momentum state $\ket{\,p_L,p_R\,}$ via the state-operator mapping.
The Hilbert space $\CH$ of a Narain CFT can be built as
\begin{align} \label{eq:Hilbert_space}
    \CH= \left\{ 
    \alpha_{-k_1}^{i_1} \dots \alpha_{-k_r}^{i_r} 
    \Tilde{\alpha}_{-l_1}^{j_1} \dots \Tilde{\alpha}_{-l_s}^{j_s} 
    \ket{\,p_L,p_R\,} \, \bigg| \, (p_L,p_R) \in \Gamma\,
    \right\} \, ,
\end{align}
with $k_1\dots,k_r \in \BZ^+$ and $l_1\dots,l_s \in \BZ^+$, and $\alpha_k^i$ and $\Tilde{\alpha}_k^i \, (i=1,...,n)$ satisfy the algebra:
\begin{align}
    [\alpha_k^i,\alpha_l^j] = [\Tilde{\alpha}_k^i,\Tilde{\alpha}_l^j] = k\,\delta_{k+l,0}\,\delta^{i,j}, \qquad k,\,l \in \BZ \, .
\end{align}

The torus partition function of the Narain CFT with a momentum lattice $\Gamma$ is given as follows:
\begin{align}\label{torus_PF}
    \begin{aligned}
        Z(\tau,\bar{\tau}) 
            &= 
                \Tr_{\CH} \left[ q^{L_0-\frac{n}{24}} \bar{q}^{\bar{L}_0-\frac{n}{24}}\right] \\    
            &=
            \frac{1}{|\eta(\tau)|^{2n}}\sum_{P \,\in\, \Gamma} q^{\frac{p_L^2}{2}} \bar{q}^{\frac{p_R^2}{2}} \\
            &= 
                \frac{\Theta_{\Gamma}(\tau,\bar{\tau})}{|\eta(\tau)|^{2n}} \ ,        
    \end{aligned}
\end{align}
where $\eta(\tau)$ is the Dedekind eta function, $\Theta_{\Gamma}(\tau,\bar{\tau})$ is the lattice theta function of momentum lattice $\Gamma$, and $q=e^{2\pi \i \tau}$ with $\tau = \tau_1+\i\,\tau_2$.

\subsection{Rationality for theories with small central charges}
Rational CFTs are a special class of CFTs that have a finite number of representations of some chiral algebras \cite{Friedan:1983xq,Moore:1988uz,Moore:1988qv,Moore:1989yh}.
Here, we make some comments on the rationality of Narain CFTs with small central charges.

Narain CFTs with central charge $c=n=1$ are described by a free compact boson $X^{\mu=1}$. 
Switching to the dimensional field $X^{(1)}$, the periodicity of the boson becomes
\begin{align}\label{radius_dimensionless_boson}
    X^{(1)} \simeq X^{(1)} + 2\pi R^{(1)} \ , \qquad R^{(1)} \equiv r\,\sqrt{G_{11}} \ .
\end{align}
The theory is known to be rational when $R^{(1)} = \sqrt{\frac{2\,s}{t}}$ for a pair of coprime positive integers $s, t$ \cite{DiFrancesco:1997nk}.

Narain CFTs with central charge $c=n=2$ consist of two compact bosons denoted by $x = X^{\mu=1}$ and $y = X^{\mu =2}$.
In this case, the rationality is related to the four real parameters parametrizing the moduli space:
\begin{align}
    G = \begin{bmatrix}
        G_{11} & G_{12} \\
        G_{12} & G_{22}
    \end{bmatrix} \quad \text{and} \quad 
    B = \begin{bmatrix}
        0 & B_{12} \\
        -B_{12} & 0
    \end{bmatrix} \, .
\end{align}
However, we can rewrite these moduli parameters in terms of the complex structure modulus $\bm{\tau}$ and the complexified K\"{a}hler modulus $\bm{\rho}$ defined by (see e.g., \cite{Polchinski:1998rq})
\begin{align}
\label{eq:gen_tau_rho}
    \bm{\tau} &= \bm{\tau}_1 + \i\, \bm{\tau}_2 = \frac{G_{12}}{G_{11}} + \i\, \frac{\sqrt{\det G}}{G_{11}} \\
    \bm{\rho} &= \bm{\rho}_1 + \i\, \bm{\rho}_2 = \frac{R^2}{\alpha'} (B_{12} + \i\, \sqrt{\det G}) \,.
\end{align}
Then, it is straightforward to check that
\begin{align}
\begin{aligned}
    \d s^2 &= \frac{\alpha'\,\bm{\rho}_2}{R^2\,\bm{\tau}_2} \,\left|\d x+\bm{\tau} \,\d y\right|^2\\
    &= G_{11} \,\d x^2 + 2G_{12} \,\d x\d y + G_{22}\, \d y^2\,.
\end{aligned}
\end{align}
Note that, in the dimensionless radius $r = R \sqrt{2/\alpha'}$, we obtain
\begin{align}
\begin{aligned}
    \bm{\rho} = \frac{r^2}{2} (B_{12} + \i\, \sqrt{\det G})\,.
\end{aligned}
\end{align}
For a rational Narain theory, the complex parameter $\bm{\tau}$ satisfies 
\begin{align} \label{eq:tau}
    a\, \bm{\tau}^2 + b\, \bm{\tau} + c = 0 \ ,
\end{align} 
with relatively prime coefficients $a,b$ and $c$ and is an element of the \textit{imaginary quadratic number field} (meaning the discriminant $D<0$, which is the case in all relevant applications in physics):
\begin{align}
    K = \BQ(\sqrt{D}) \, ,
\end{align}
$\bm{\tau} \in K$, where $D=b^2-4\,a\,c$. This means $\bm{\tau} = \alpha + \beta\, \sqrt{D}$ where $\alpha,\beta \in \BQ$ and is a modulus of the elliptic curve
\begin{align}
    E = \BC / (\BZ \oplus \bm{\tau}\, \BZ) \,.
\end{align}
We also obtain the same relations to the other complex parameter $\bm{\rho}$. 
One result of \cite{Gukov:2002nw} is that a Narain CFT is rational \textit{if and only if} both $\bm{\tau}$ and $\bm{\rho}$ take values in the same imaginary quadratic number field:
\begin{align}
    \text{Rational CFT} \quad \Longleftrightarrow \quad \bm{\tau},\,\bm{\rho} \in \BQ(\sqrt{D}) \, .
\end{align} 
In our constructions, up to absorption of rational coefficients into $\beta$ for $\bm{\tau}\in K$, $D=-1$.

\section{Narain CFTs from quantum stabilizer codes over $\BZ_k$} 
\label{ss:ringcodes}

In this section, we construct Narain CFTs from quantum stabilizer codes over a finite ring $\BZ_k = \{0,1,2,\cdots,k-1\}$ where $k$ is a positive integer.
Our strategy is to extend the previous construction~\cite{Dymarsky:2020qom, Kawabata:2022jxt}.
In section~\ref{ss:ring_lattice}, we use the relationship between quantum stabilizer codes and classical codes shown in section~\ref{ss:review_stabilizer} to provide a way of constructing Lorentzian even self-dual lattices.
Then, regarding the resulting lattice as a momentum lattice of a Narain CFT, we obtain the torus partition function of the whole Hilbert space for a Narain code CFT in section~\ref{ss:ring_narain}.
After section~\ref{ss:ring_example}, we show various examples of Narain code CFTs with small central charges, which allows us to identify the corresponding modulus and discuss the rational structure of Narain code CFTs.

\subsection{Lorentzian lattices via Construction A}
\label{ss:ring_lattice}

In this section, we focus on a finite ring $\BZ_k = \{0,1,2,\cdots,k-1\}$ of integers modulo a positive integer $k$.
As the finite ring $\BZ_k$ is a special type of finite Frobenius ring, we can formulate a quantum stabilizer code over $\BZ_k$ as in section~\ref{ss:review_stabilizer}.
This subsection is devoted to constructing Lorentzian even self-dual lattices from those quantum stabilizer codes. 

Let $S\subset G_n$ be an abelian group.
Then, we obtain a quantum stabilizer code as a fixed subspace of $\BC^{k^n}$ by any element of the stabilizer group $S$.
From Proposition \ref{prop:additive_stab},
we have the corresponding classical code $\CC\subseteq \BZ_k^{2n}$ 
\begin{align}
    \CC = \left\{\Bu = (\Ba\,|\, \Bb)\in \BZ_k^{2n}\;\middle|\;g(\Bu)\in S\,\right\}\,.
\end{align}

We introduce the off-diagonal Lorentzian inner product on $\BZ_k^{2n}$. For $\Bu = (\Ba,\Bb)$, $\Bv = (\Ba',\Bb')\in\BZ_k^{2n}$, we define the inner product by
\begin{align}
\label{eq:lor-inner}
    \eta(\Bu,\Bv) = \Bu \,\eta\, \Bv^T = \Ba\cdot\Bb' + \Ba'\cdot\Bb\,,\qquad 
    \eta = \begin{bmatrix}
        ~0~ & I_n~ \\
        ~I_n~ & 0~
    \end{bmatrix}\,.
\end{align}
Associated with the off-diagonal Lorentzian inner product, we define the dual code $\CC^\perp$ as
\begin{align}
    \label{eq:CC_perp}
    \CC^\perp = \left\{\, \Bv \in \BZ_k^{2n} \; \middle|\;  \eta(\Bu,\Bv) = 0 \;\mathrm{mod}\; k \, , \; \Bu \in \CC \,\right\} \, . 
\end{align}
We call $\CC$ self-orthogonal if $\CC\subset\CC^\perp$ and self-dual if $\CC = \CC^\perp$.
Also, we call $\CC$ even if $\eta(\Bu,\Bu) \in 2k\,\BZ$ for any $\Bu\in \CC$.

Note that self-orthogonality and evenness are correlated.
Indeed, if $\CC$ is even, then $\CC$ is self-orthogonal.
To see this, suppose that $\CC$ is an even code with respect to $\eta$.
Then, the norm of $\Bu+\Bv\in\CC$ for $\Bu$, $\Bv\in\CC$ is
\begin{align}
    \eta(\Bu+\Bv,\Bu+\Bv) = \eta(\Bu,\Bu) + \eta(\Bv,\Bv) + 2\,\eta(\Bu,\Bv)\,.
\end{align}
As all the norms are in $2k\,\BZ$, then we know $\eta(\Bu,\Bv)\in k\,\BZ$, which implies the self-orthogonality of $\CC$.
For an odd $k\in2\BZ+1$, the opposite is also true.
For $k\in2\BZ+1$, if $\CC$ is self-orthogonal, then $\CC$ is even.
To check this, consider the norm $\eta(\Bu,\Bu) = 2\Ba\cdot\Bb\in 2\BZ$ where $\Bu = (\Ba,\Bb)\in\CC$.
Combining the assumption of self-orthogonality, the norm $\eta(\Bu,\Bu)$ is even and a multiple of $k$.
Hence, for an odd $k$, we conclude $\eta(\Bu,\Bu)\in 2k\,\BZ$ (evenness).

The classical code $\CC$ provides the associated lattice via Construction A \cite{leech1971sphere,conway2013sphere}.
Explicitly, we obtain the Construction A lattice $\Lambda(\CC)$
\begin{align} 
\label{eq:const-A}
\Lambda(\CC) = 
\bigg\{
\left( 
\frac{ \Ba + k\, m_1}{\sqrt{k}}
,
\frac{ \Bb + k\, m_2}{\sqrt{k}}
\right)
\in \BR^n \times \BR^n
\;\bigg|\;
m_1,m_2\in\BZ^n \,,
\
(\Ba,\Bb) \in \CC
\bigg\} \, .
\end{align}
The Construction A lattice $\Lambda(\CC)$ has the off-diagonal Lorentzian inner product \eqref{eq:lor-inner}.
In analogy with classical codes, we define the dual lattice $\Lambda^*$ of $\Lambda$ by
\begin{align}
    \Lambda^* = \left\{\lambda'\in\BR^{2n}\;\middle|\;\eta(\lambda,\lambda') \in\BZ\,,\;\lambda\in\Lambda \right\}\,.
\end{align}
The lattice $\Lambda$ is called self-dual if $\Lambda = \Lambda^*$.
Also, we call $\Lambda$ even if it satisfies $\eta(\lambda,\lambda)\in 2\BZ$ for any element $\lambda\in\Lambda$.
Construction A relates the self-duality of $\CC$ to the one of $\Lambda(\CC)$ as in the following proposition.

\begin{proposition}
    Let $\CC$ be a classical code over $\BZ_k$.
    Then, the Construction A lattice $\Lambda(\CC)$ is self-dual with respect to $\eta$ if and only if $\CC$ is self-dual with respect to $\eta$.  
    \label{prop:ring_self}
\end{proposition}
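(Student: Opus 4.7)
My plan is to identify the dual lattice $\Lambda(\CC)^*$ with the Construction A lattice $\Lambda(\CC^\perp)$ of the dual code, so that the self-duality claim for the lattice reduces to the self-duality claim for the code. To this end I would introduce the auxiliary lattices $\Lambda_0 = \sqrt{k}\,\BZ^{2n}$ and $\Lambda_1 = \frac{1}{\sqrt{k}}\,\BZ^{2n}$, and first verify the ``outer'' duality $\Lambda_0^* = \Lambda_1$ (hence also $\Lambda_1^* = \Lambda_0$) by checking when a vector $\frac{1}{\sqrt{k}}(\Bx_1,\Bx_2)$ pairs integrally with every $\sqrt{k}(\Bn_1,\Bn_2)$ under $\eta$. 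Since the definition \eqref{eq:const-A} gives the sandwich $\Lambda_0 \subseteq \Lambda(\CC) \subseteq \Lambda_1$ with $[\Lambda(\CC):\Lambda_0]=|\CC|$, dualizing yields $\Lambda_0 \subseteq \Lambda(\CC)^* \subseteq \Lambda_1$; using that $\eta$ induces a perfect pairing on $\Lambda_1/\Lambda_0\cong \BZ_k^{2n}$ (because $\det\eta=\pm 1$), under which $\Lambda(\CC)/\Lambda_0$ and $\Lambda(\CC)^*/\Lambda_0$ are orthogonal complements, one obtains $[\Lambda(\CC)^*:\Lambda_0]=k^{2n}/|\CC|$.

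Next I would show $\Lambda(\CC^\perp)=\Lambda(\CC)^*$. For any $\lambda=\tfrac{1}{\sqrt k}(\Bu+k\Bm)\in\Lambda(\CC^\perp)$ with $\Bu\in\CC^\perp$ and any $\lambda'=\tfrac{1}{\sqrt k}(\Bv+k\Bm')\in\Lambda(\CC)$ with $\Bv\in\CC$, a direct expansion gives
\begin{equation*}
\eta(\lambda,\lambda')\;=\;\tfrac{1}{k}\,\eta(\Bu,\Bv)\;+\;\bigl(\text{integer terms involving }\Bu,\Bv,\Bm,\Bm'\bigr),
\end{equation*}
and by definition \eqref{eq:CC_perp} the first term already lies in $\BZ$, so $\Lambda(\CC^\perp)\subseteq\Lambda(\CC)^*$. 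Equality then follows from a cardinality count: non-degeneracy of $\eta$ modulo $k$ gives $|\CC|\cdot|\CC^\perp|=k^{2n}$, whence $[\Lambda(\CC^\perp):\Lambda_0]=|\CC^\perp|=k^{2n}/|\CC|=[\Lambda(\CC)^*:\Lambda_0]$, forcing $\Lambda(\CC^\perp)=\Lambda(\CC)^*$.

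Given this identification, the proposition is immediate: $\Lambda(\CC)=\Lambda(\CC)^*\iff\Lambda(\CC)=\Lambda(\CC^\perp)\iff\CC=\CC^\perp$, where the last equivalence uses injectivity of the assignment $\CC'\mapsto\Lambda(\CC')$ from subgroups of $\BZ_k^{2n}$ into lattices between $\Lambda_0$ and $\Lambda_1$ (one recovers $\CC'$ as the image of $\sqrt{k}\,\Lambda(\CC')\cap\BZ^{2n}$ modulo $k$). The only delicate bookkeeping is the cardinality step, which relies on the non-degeneracy of the pairing induced by $\eta$ on the quotient $\Lambda_1/\Lambda_0\cong\BZ_k^{2n}$; everything else is a direct adaptation of the standard Euclidean Construction A duality to the Lorentzian form $\eta$.
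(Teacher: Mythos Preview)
Your argument is correct and reaches the same intermediate goal as the paper, namely $\Lambda(\CC)^* = \Lambda(\CC^\perp)$, but the route differs. The paper establishes both inclusions by direct computation: for $\Lambda(\CC^\perp)\subset\Lambda(\CC)^*$ it expands $\eta(\lambda,\lambda')$ exactly as you do, and for the reverse inclusion it first tests $\lambda'\in\Lambda(\CC)^*$ against the sublattice $\sqrt{k}\,\BZ^{2n}$ to force $\lambda'\in\tfrac{1}{\sqrt{k}}\BZ^{2n}$, then tests against general $\lambda\in\Lambda(\CC)$ to force the residue of $\lambda'$ to lie in $\CC^\perp$. You instead prove only the easy inclusion and then close the gap by an index count, using that $\eta$ (being unimodular over $\BZ$) induces a perfect pairing on $\Lambda_1/\Lambda_0\cong\BZ_k^{2n}$ so that $[\Lambda(\CC)^*:\Lambda_0]=k^{2n}/|\CC|=|\CC^\perp|=[\Lambda(\CC^\perp):\Lambda_0]$. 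Your approach is more structural and immediately generalizes to any unimodular integral form (or Frobenius ring, once the perfect-pairing statement is in place), while the paper's approach is more elementary and self-contained, requiring no auxiliary facts about annihilators in finite abelian groups. Both arrive at the final equivalence via the injectivity of $\CC'\mapsto\Lambda(\CC')$, which you make explicit and the paper states as the ``one-to-one correspondence'' between codes and Construction~A lattices.
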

\begin{proof}
    We start by proving $\Lambda(\CC)^* = \Lambda(\CC^\perp)$.
    Let us take lattice vectors $\lambda = (\lambda_1,\lambda_2)$, $\lambda' = (\lambda_1',\lambda_2')$ such that
    \begin{align}
     \lambda_1 = \frac{ \Ba + k\, m_1}{\sqrt{k}} \ , &\qquad \lambda_2 = \frac{ \Bb + k\, m_2}{\sqrt{k}} \ ,\label{eq:latvec}\\
     \lambda_1' = \frac{ \Ba' + k\, m_1'}{\sqrt{k}} \ , &\qquad \lambda_2' = \frac{ \Bb' + k\, m_2'}{\sqrt{k}} \ ,\label{eq:latvecp}
    \end{align}
    where $\Bu = (\Ba,\Bb), \, \Bv = (\Ba',\Bb')$ and $ m_1,m_2 \in \BZ^n$. Then, the inner product between them is given by
    \begin{align}
        \eta(\lambda_,\lambda') =   \frac{\eta(\Bu,\Bv)}{k} + (\Ba \cdot m_2' + \Ba' \cdot m_2 + \Bb \cdot m_1' + \Bb' \cdot m_1) + k\,(m_1 \cdot m_2' + m_1' \cdot m_2) \,.\label{eq:latvecprod}
    \end{align}
    
    Suppose $\lambda\in\Lambda(\CC)$ and $\lambda' \in \Lambda(\CC^\perp)$, then we can write $\lambda'$ as \eqref{eq:latvecp} with $\Bv \in \CC^\perp$. On the other hand, any $\lambda \in \Lambda(\CC)$ can be written as \eqref{eq:latvec} with $\Bu \in \CC$. 
    As $\eta(\Bu,\Bv) = 0  \in k\,\BZ$ for $\Bu\in\CC$ and $\Bv\in\CC^\perp$, we obtain $\eta(\lambda,\lambda')\in\BZ$, which concludes
    $\Lambda(\CC^\perp) \subset \Lambda(\CC)^*$.

    Suppose $\lambda = (\lambda_1,\lambda_2)\in\Lambda(\CC)$ and $\lambda' = (\lambda_1',\lambda_2')\in\Lambda(\CC)^*$.
    Let us take $\lambda_1 = \sqrt{k}\, m_1$ and $\lambda_2 = \sqrt{k}\, m_2$.
    Then, the inner product \eqref{eq:latvecprod} becomes
    \begin{align}
       {\eta}(\lambda_,\lambda') = \sqrt{k}\, m_1 \cdot \lambda_2' + \sqrt{k}\, m_2 \cdot \lambda_1'\ .
    \end{align}
    As $\eta(\lambda,\lambda')\in\BZ$, we see that $\lambda' \in (\BZ/\sqrt{k})^n$ and may write the lattice vector $\lambda' \in \Lambda(\CC)^*$ in the form of \eqref{eq:latvecp} where $\Bv = (\Ba',\Bb')\in \BZ_k^{2n}$. 
    For the inner product $\eta(\Bu,\Bv)$ to be an integer for arbitrary $\Bu\in\CC$, it is implied $\Bv \in \CC^\perp$ and $\lambda' \in \Lambda(\CC^\perp)$. We conclude $\Lambda(\CC)^* \subset \Lambda(\CC^\perp)$.  

    Thus, the lattice $\Lambda(\CC^\perp)$ is the dual of lattice $\Lambda(\CC)$\,: $\Lambda(\CC)^* = \Lambda(\CC^\perp)$. For a self-dual code, $\CC = \CC^\perp$, the lattice is also self-dual $\Lambda(\CC)^* = \Lambda(\CC)$. As lattices and codes have a one-to-one correspondence, $\Lambda(\CC)=\Lambda(\CC')$ if and only if $\CC = \CC'$. Therefore, $\Lambda(\CC)$ is self-dual with respect to ${\eta}$ if and only if $\CC$ is self-dual with respect to $\eta$.
\end{proof}

Additionally, the evenness of a classical code $\CC$ and the Construction A lattice $\Lambda(\CC)$ are equivalent.

\begin{proposition}
    The Construction A lattice $\Lambda(\CC)$ is even with respect to ${\eta}$ if and only if a classical code $\CC$ is even with respect to $\eta$.
    \label{prop:ring_even}
\end{proposition}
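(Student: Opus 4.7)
The plan is to prove both directions by a single direct computation: express $\eta(\lambda,\lambda)$ for a generic lattice vector $\lambda\in\Lambda(\CC)$ in terms of the code norm $\eta(\Bu,\Bu)$, and show that the contributions from the $m_1,m_2$ shifts are automatically in $2\BZ$, so that the evenness of $\lambda$ is controlled solely by $\eta(\Bu,\Bu)\bmod 2k$.

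Concretely, take an arbitrary lattice vector
\begin{align}
\lambda = \left(\frac{\Ba + k\,m_1}{\sqrt{k}},\;\frac{\Bb + k\,m_2}{\sqrt{k}}\right)\in\Lambda(\CC),
\end{align}
with $\Bu=(\Ba\,|\,\Bb)\in\CC$ and $m_1,m_2\in\BZ^n$. Using the off-diagonal form \eqref{eq:lor-inner}, a direct expansion (the same type of manipulation already carried out in \eqref{eq:latvecprod} with $\lambda'=\lambda$) gives
\begin{align}
\eta(\lambda,\lambda) \;=\; \frac{\eta(\Bu,\Bu)}{k} \;+\; 2\,\Ba\cdot m_2 \;+\; 2\,\Bb\cdot m_1 \;+\; 2k\,m_1\cdot m_2 .
\end{align}
The last three terms lie in $2\BZ$ regardless of the choice of representatives, so $\eta(\lambda,\lambda)\in 2\BZ$ if and only if $\eta(\Bu,\Bu)/k\in 2\BZ$, i.e.\ $\eta(\Bu,\Bu)\in 2k\BZ$.

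For the ``if'' direction, assuming $\CC$ is even, the displayed identity immediately yields $\eta(\lambda,\lambda)\in 2\BZ$ for every $\lambda\in\Lambda(\CC)$. For the ``only if'' direction, given $\Bu=(\Ba\,|\,\Bb)\in\CC$, choose the distinguished representative $\lambda_0=(\Ba/\sqrt{k},\Bb/\sqrt{k})\in\Lambda(\CC)$ obtained by setting $m_1=m_2=0$; then $\eta(\lambda_0,\lambda_0)=\eta(\Bu,\Bu)/k$, and the evenness of $\Lambda(\CC)$ forces $\eta(\Bu,\Bu)\in 2k\BZ$.

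I do not anticipate a genuine obstacle here: the statement is essentially a bookkeeping identity about how the factors of $\sqrt{k}$ in Construction A rescale the inner product, entirely parallel to Proposition \ref{prop:ring_self}. The only mild subtlety is to keep track of the factor of $2$ from the symmetry of $\eta$ (so that the cross terms $2\Ba\cdot m_2+2\Bb\cdot m_1$ carry the necessary factor of $2$ and automatically lie in $2\BZ$); this is why the equivalence is with evenness $\eta(\Bu,\Bu)\in 2k\BZ$ rather than merely $\eta(\Bu,\Bu)\in k\BZ$, which would be the analogous condition for self-orthogonality.
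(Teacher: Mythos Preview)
Your proposal is correct and follows essentially the same approach as the paper: both expand $\eta(\lambda,\lambda)$ into $\eta(\Bu,\Bu)/k$ plus manifestly even cross terms, and then read off the equivalence. Your version is slightly more explicit in the ``only if'' direction by choosing the specific representative $m_1=m_2=0$, whereas the paper states this implication tersely, but the argument is the same.
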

\begin{proof}
    Let $\lambda\in\Lambda(\CC)$ be an element denoted by \eqref{eq:latvec}.
    From the relation 
    \begin{align}
        {\eta}(\lambda_,\lambda) =   \frac{\eta(\Bu,\Bu)}{k} + 2\,(\Ba \cdot m_2+ \Bb \cdot m_1) + 2\,k\, m_1 \cdot m_2\,,
    \end{align}
    we see that the second and third terms are even. 
    Assuming an even code $\CC$, the norm is $\eta(\Bu,\Bu) \in 2k\, \BZ$ which gives an even lattice $\Lambda(\CC)$. 
    If the lattice $\Lambda(\CC)$ is even, this implies that $\CC$ is even. 
\end{proof}

Combining Proposition~\ref{prop:ring_self} and \ref{prop:ring_even}, we arrive at the following theorem essential to the construction of Narain CFTs.
\begin{theorem}
    The Construction A lattice $\Lambda(\CC)$ is even self-dual with respect to $\eta$ if and only if a classical code $\CC$ is even self-dual with respect to $\eta$.
    \label{theorem:self_duality_ring_lattice}
\end{theorem}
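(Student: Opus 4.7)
The plan is essentially trivial: the theorem is a direct conjunction of the two preceding propositions, so the proof amounts to observing that ``even self-dual'' is by definition the logical conjunction ``even \emph{and} self-dual,'' and both factors have already been established as biconditionals separately.

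More concretely, I would argue as follows. First, unpack the definitions: $\Lambda(\CC)$ is even self-dual with respect to $\eta$ iff $\Lambda(\CC)$ is even with respect to $\eta$ and $\Lambda(\CC) = \Lambda(\CC)^{*}$; likewise $\CC$ is even self-dual iff $\CC$ is even and $\CC = \CC^{\perp}$. By Proposition~\ref{prop:ring_even}, the first conjunct on the lattice side is equivalent to the first conjunct on the code side. By Proposition~\ref{prop:ring_self}, the second conjunct on the lattice side is equivalent to the second conjunct on the code side. Taking the conjunction of the two equivalences yields the theorem.

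There is no real obstacle here; the only thing worth double-checking is that the two propositions were proven without any auxiliary hypothesis that would prevent them from being combined (for instance, Proposition~\ref{prop:ring_self} does not require $\CC$ to be even, and Proposition~\ref{prop:ring_even} does not require $\CC$ to be self-dual), which a quick inspection of their statements confirms. Accordingly, I would present the proof as a one-line corollary-style argument rather than redoing any of the Construction A bookkeeping of the preceding lemmas.
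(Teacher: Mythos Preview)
Your proposal is correct and matches the paper's approach exactly: the paper simply states that the theorem follows by combining Proposition~\ref{prop:ring_self} and Proposition~\ref{prop:ring_even}, with no further argument given.
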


To construct a Lorentzian even self-dual lattice, we need to prepare an even self-dual code $\CC$. We can use the CSS construction of quantum stabilizer codes introduced in \eqref{eq:css_frobenius}.
Let $C$ be a classical linear code over $\BZ_k$. Using the dual code $C^\perp$ with respect to the Euclidean inner product, we set $C_X = C^\perp$ and $C_Z = C$. Then, the condition $C_Z^\perp\subset C_X$ is satisfied. Hence, we obtain the classical code $\CC = C \times C^\perp\subset\BZ_k^{2n}$ 
\begin{align}
    \label{eq:css-stab}
    \CC = \left\{\, \Bu =(\Ba,\Bb) \in \BZ^{2n}_k \; \middle|\;  \Ba \in C, \, \Bb \in C^\perp \right\} \, .
\end{align}
These classical codes $\CC$ specify the CSS-type of quantum stabilizer codes.
We can show that the classical code $\CC$ yields a Lorentzian even self-dual lattice via Construction A.

\begin{proposition}
    Let $C$ be a classical linear code over $\BZ_k$ and $\CC$ the classical code given by the CSS construction \eqref{eq:css-stab}.
    Then, the Construction A lattice $\Lambda(\CC)$ is even self-dual with respect to $\eta$.
    \label{prop:CSS_ring}
\end{proposition}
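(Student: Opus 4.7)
The plan is to invoke Theorem \ref{theorem:self_duality_ring_lattice}, which reduces the statement to showing that the classical code $\CC = C \times C^\perp$ defined in \eqref{eq:css-stab} is itself even self-dual with respect to the off-diagonal Lorentzian form $\eta$ on $\BZ_k^{2n}$. This splits neatly into three elementary checks: self-orthogonality, evenness, and a cardinality count to upgrade self-orthogonality to self-duality.

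First, I would verify self-orthogonality. For two codewords $\Bu = (\Ba,\Bb)$ and $\Bv = (\Ba',\Bb')$ in $\CC$, one has $\Ba,\Ba'\in C$ and $\Bb,\Bb'\in C^\perp$, hence
\begin{align}
\eta(\Bu,\Bv) \;=\; \Ba\cdot\Bb' + \Ba'\cdot\Bb \;\equiv\; 0 \pmod{k}
\end{align}
by the defining property of the Euclidean dual code $C^\perp$ in \eqref{eq:dual_code_Euclidean}. Second, evenness is immediate from the same computation with $\Bu = \Bv$: for $\Bu=(\Ba,\Bb)\in\CC$ one has $\eta(\Bu,\Bu) = 2\,\Ba\cdot\Bb$, and $\Ba\cdot\Bb\equiv 0\pmod{k}$ because $\Ba\in C$ and $\Bb\in C^\perp$, so $\eta(\Bu,\Bu)\in 2k\,\BZ$.

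Finally, self-orthogonality gives only $\CC\subseteq\CC^\perp$; to promote this to equality I would compare cardinalities. Since $\CC$ is the direct product $C\times C^\perp$, we have $|\CC|=|C|\cdot|C^\perp|$. For a linear code over the Frobenius ring $\BZ_k$ the duality relation $|C|\cdot|C^\perp| = k^n$ holds, so $|\CC|=k^n$. On the other hand the general cardinality formula for the $\eta$-dual gives $|\CC^\perp| = k^{2n}/|\CC| = k^n$, which combined with $\CC\subseteq\CC^\perp$ forces $\CC=\CC^\perp$. Then Theorem \ref{theorem:self_duality_ring_lattice} (via Propositions \ref{prop:ring_self} and \ref{prop:ring_even}) concludes that $\Lambda(\CC)$ is even self-dual with respect to $\eta$.

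The only non-trivial ingredient is the cardinality identity $|C|\cdot|C^\perp| = k^n$, which for a field is a standard dimension argument but over the ring $\BZ_k$ must be justified via the Frobenius-ring duality theory underlying the character pairing (this is precisely the structural input recorded in \cite{wood1999duality}). All other steps are direct manipulations of the Euclidean and Lorentzian pairings, so this is where the proof should invest its citation weight; the remainder is routine.
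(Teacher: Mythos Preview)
Your proof is correct and follows the paper's overall strategy: reduce via Theorem~\ref{theorem:self_duality_ring_lattice} to checking that $\CC=C\times C^\perp$ is even and self-dual, with the evenness step identical to the paper's. The only genuine difference is in how you establish self-duality. You argue by cardinality: self-orthogonality gives $\CC\subseteq\CC^\perp$, and then the Frobenius identity $|C|\cdot|C^\perp|=k^n$ (together with $|\CC^\perp|=k^{2n}/|\CC|$) forces equality. The paper instead argues by direct containment: testing against $(\Ba,0)$ and $(0,\Bb)$ separately shows that any $(\Ba',\Bb')\in\CC^\perp$ must satisfy $\Ba'\in(C^\perp)^\perp$ and $\Bb'\in C^\perp$, hence $\CC^\perp\subseteq\CC$. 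Both arguments ultimately lean on the Frobenius structure of $\BZ_k$ (the paper implicitly via the double-dual identity $(C^\perp)^\perp=C$, you explicitly via the cardinality formula from \cite{wood1999duality}); your version has the advantage of isolating that dependence cleanly in a single citation, while the paper's version is slightly more hands-on and avoids the second cardinality identity for the $\eta$-dual.
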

\begin{proof}
    From Theorem~\ref{theorem:self_duality_ring_lattice}, we verify the evenness and self-duality of the classical code $\CC = C\times C^\perp$.
    To see evenness, consider the norm $\eta(\Bu,\Bu) = 2\,\Ba\cdot \Bb$ where $\Bu = (\Ba,\Bb)\in C\times C^\perp$. The norm is in $2\,k\,\BZ$ because $\Ba\cdot\Bb \in k\,\BZ$ for $\Ba\in C$ and $\Bb\in C^\perp$.
    Hence, the CSS construction $\CC$ is even with respect to $\eta$.
    To see self-duality, we expand (\ref{eq:CC_perp}) as
    \begin{align}
        \label{eq:CC_perp_exp}
        \CC^\perp = \left\{\, (\Ba',\Bb') \in \BZ^{2n}_k \;\middle|\;  \Ba \cdot \Bb' + \Bb \cdot \Ba' = 0\;\,\mathrm{mod} \;k\,, \; \Ba \in C\,, \;  \Bb \in C^{\perp}\, \right\}\,.
    \end{align}
    Then, by independently considering the cases $\Ba = 0 \in C$ and $\Bb = 0 \in C^\perp$, we obtain the requirements $\Bb \cdot \Ba' = 0\mod k$ for $\Bb\in C^\perp$ and $\Ba \cdot \Bb' = 0\mod k$  for $\Ba\in C$, respectively. 
    These conditions reduce to $\Ba'\in C$ and $\Bb'\in C^\perp$.
    Therefore, we have
    \begin{align}
        \CC^\perp\subset \left\{\, (\Ba',\Bb') \in \BZ^{2n}_k \, |\,  \Ba' \in C, \, \Bb' \in C^\perp \right\} \equiv \CC \, . 
    \end{align}
    On the other hand, it is straightforward to see $\CC\subset \CC^\perp$, then we conclude $\CC=\CC^\perp$.
\end{proof}

\subsection{Narain code CFTs}
\label{ss:ring_narain}

By identifying the even self-dual code lattice $\Lambda(\CC)$ from Construction A with a momentum lattice $\Gamma$, we can construct a Narain code CFT associated with the code $\CC$.
The lattice vectors $\lambda \in \Lambda(\CC)$ are related to the left-moving and right-moving momenta $(p_L,p_R)$ as follows:
\begin{align}
    \label{eq:frame-relation}
    (\lambda_1,\lambda_2) = \left( \frac{p_L + p_R}{\sqrt{2}}, \frac{p_L - p_R}{\sqrt{2}}\right) \in \Lambda(\CC) \, .
\end{align}
We denote the set of elements $P = (p_L, p_R)$ related to $(\lambda_1,\lambda_2)\in\Lambda(\CC)$ by \eqref{eq:frame-relation} as $\Tilde{\Lambda}(\CC)$.
In the left- and right-moving momenta frame, the norm of $\lambda$ with respect to ${\eta}$ is 
\begin{align}
    {\eta}(\lambda,\lambda) = p_L^2 - p_R^2 = \eta_0(P,P)\ .
\end{align}

The torus partition function \eqref{torus_PF} of the Narain code CFT can be obtained by setting $\Gamma = \tilde\Lambda(\CC)$\,:
\begin{align}
    Z_\CC(\tau,\bar{\tau}) = \frac{1}{|\eta(\tau)|^{2n}} \sum_{(\Ba,\Bb)\,\in\, \CC} \;\sum_{m_1,m_2\, \in\, \BZ^n} q^{\frac{k}{4} \left(\frac{\Ba+\Bb}{k} + m_1+m_2\right)^2} \bar{q}^{\frac{k}{4} \left(\frac{\Ba-\Bb}{k} + m_1-m_2\right)^2} \, .
\end{align}
As in the previous constructions \cite{Dymarsky:2020qom,Kawabata:2022jxt} of Narain code CFTs, we can express these partition functions in terms of the complete weight enumerators\footnote{They are also called complete enumerator polynomials. We will use the two terms interchangeably in this paper.}
of classical codes $\CC$.
We define the complete weight enumerator of $\CC$ by
\begin{align}
    W_\CC(\{x_{ab}\}) = \sum_{c \,\in\, \CC} \;\prod_{(a,b)\,\in\, \BZ_k \times \BZ_k} x_{ab}^{\text{wt}_{ab}(c)}\ ,
\end{align}
where $x_{ab}$ are variables for $a, b\in \BZ_k$ and $\text{wt}_{ab}$ is defined as the number of components $c_i = (\Ba_i, \Bb_i) \in \BZ_k \times \BZ_k$ equal to $(a, b) \in \BZ_k \times \BZ_k$ for a codeword in $\CC$\,:
\begin{align}
    \text{wt}_{ab}(c) = |\, \{i \, |\, c_i = (a,b)\}\, |\,.
\end{align}

Then, the partition function of the Narain code CFT can be uniquely determined by the complete weight enumerator of $\CC$ as
\begin{align}\label{eq:ZC-WC}
     Z_\CC(\tau,\bar{\tau}) = \frac{1}{|\eta(\tau)|^{2n}}\,  W_\CC(\{\psi_{ab}\}) \ ,
\end{align}
where the functions $\psi_{ab}(\tau,\Bar{\tau})$ are given by 
\begin{align} \label{eq:psi_ab}
    \psi_{ab}(\tau,\bar{\tau}) = \sum_{m_1,m_2 \,\in\, \BZ^n} q^{\frac{k}{4} \left(\frac{a+b}{k} + m_1+m_2\right)^2} \bar{q}^{\frac{k}{4} \left(\frac{a-b}{k} + m_1-m_2\right)^2} \, .
\end{align}
It is convenient to express the function $\psi_{ab}$ in terms of the theta function as 
\begin{align}\label{eq:psi-ab-Theta}
    \psi_{ab}(\tau,\bar{\tau}) 
        = 
        \Theta_{a+b,k}(\tau)\, \bar{\Theta}_{a-b,k}(\bar{\tau}) +  \Theta_{a+b-k,k}(\tau) \, \bar{\Theta}_{a-b-k,k}(\bar{\tau})\ ,
\end{align}
where $\Theta_{m,k}(\tau)$ is given by
\begin{align}\label{eq:Theta-m-k}
    \Theta_{m,k}(\tau) = \sum_{n\,\in\,\BZ} q^{k\left(n+\frac{m}{2k}\right)^2} \, .
\end{align}
In particular, for $k=2$, the functions $\psi_{ab}$ can also be expressed as
\begin{align}\label{eq:psi-ab-theta}
\begin{aligned}
    \psi_{00} 
        &= 
        \frac{1}{2} \left( \theta_3(\tau)\,\bar{\theta}_3(\bar{\tau}) + \theta_4(\tau)\,\bar{\theta}_4(\bar{\tau}) \right) \ , \\
    \psi_{11} 
        &= 
        \frac{1}{2} \left( \theta_3(\tau)\,\bar{\theta}_3(\bar{\tau}) - \theta_4(\tau)\,\bar{\theta}_4(\bar{\tau})\right) \ , \\
    \psi_{01} 
        &= 
        \psi_{10} = \frac{1}{2}\, \theta_2(\tau)\,\bar{\theta}_2(\bar{\tau})\ ,
\end{aligned}
\end{align}
where $\theta_i(\tau)$ $(i=2,3,4)$ are the Jacobi theta functions at $z=0$.
We note that the combination $\Theta_{m,k}(\tau)/\eta(\tau)$ is a character of the chiral algebra corresponding to the level-$2k$ Chern-Simons theory $\U(1)_{2k}$.
Thus the partition function~(\ref{eq:ZC-WC}) is a finite sum of products of characters, and the CFT is rational.

When $\mathcal{C}$ is given by the CSS construction, we can read off the metric and the B-field from the lattice as follows.\footnote{
The following is valid not only for codes over $\BZ_k$, but also for codes over $\BF_{p^m}$.}
The Construction A lattice $\Lambda(\CC)$ for a CSS-type code $\CC = C\times C^\perp$ has a product structure: $\Lambda(\CC) = \Lambda(C\times C^\perp) = \Lambda(C)\times\Lambda(C^\perp)$. 
In general, if a matrix $G_{\Lambda}$ is a generator matrix of a lattice $\Lambda$, i.e.,
\begin{align}
    \Lambda = \left\{ \, v \, G_\Lambda \in \BR^n\mid v\in\BZ^n\,\right\}\,,
\end{align}
then $G_\Lambda^\ast := (G_\Lambda^{-1})^T$ is that of the dual lattice $\Lambda^\ast$\,: $G_{\Lambda^\ast} = G_\Lambda^\ast$. Thus, if an $n\times n$ matrix $G_{\Lambda(C)}$ is the generator matrix of $\Lambda(C)$, then the generator matrix of the lattice $\Lambda(\CC)$ can be chosen as
\begin{equation} \label{eq:gen_mat_CC}
    G_{\Lambda(\CC)} = 
    \begin{bmatrix} 
        ~G_{\Lambda(C)}~ & 0 ~\\ 
        ~0~ & G_{\Lambda(C)}^\ast ~
    \end{bmatrix} \,.
\end{equation}
We find this to be a special case of the general generator matrix given by~\cite{Yahagi:2022idq}
\begin{equation}
    \begin{bmatrix}
        \frac{\sqrt{2}}{r}\, \gamma^\ast & 0 \\
        -\frac{r}{\sqrt{2}}\, B\, \gamma^\ast & \frac{r}{\sqrt{2}}\, \gamma
    \end{bmatrix} \ ,
\end{equation}
where $\gamma_{\mu i}=e_\mu^{~i}$ is the vielbein defined in \eqref{vielbein} regarded as a matrix. 
Thus, the Construction A lattice with the generator matrix \eqref{eq:gen_mat_CC} describes a CFT with $\gamma=\frac{\sqrt{2}}{r}\, G_{\Lambda(C)}^\ast$ and $B=0$.
Namely, the metric and anti-symmetric tensor of the Narain code CFT can be chosen as 
\begin{equation}\label{G_B_from_lattice}
    G_{\mu\nu} 
        = 
        \frac{2}{r^2}\,\left(G_{\Lambda(C)}^\ast \,G_{\Lambda(C)}^{-1}\right)_{\mu\nu} \,,\qquad B_{\mu\nu}=0 \,.
\end{equation}
Note that via unimodular transformation on the generator matrix, we can always switch to equivalent points in the Narain moduli space which does have a non-trivial $B$ field.
However, the representative \eqref{G_B_from_lattice} is useful to verify the rationality of our Narain code CFTs with the complex parameters \eqref{eq:gen_tau_rho}.

\subsection{Examples}
\label{ss:ring_example}
We have formulated the construction of Narain CFTs from stabilizer codes over $\BZ_k$.
Here we give a number of examples of CFTs constructed over various finite rings with a small central charge. 
To elucidate the profile of the resultant Narain CFTs, we give a way of computing Narain moduli from their lattice structure.
In particular, for the central charge one $(n=1)$, we see that all rational points of a compact boson emerge from our construction.
For $n=2$, we obtain the complex parameters on the target torus to discuss the rationality.

\subsubsection{Codes over $\BZ_k$ with $n=1$}
Perhaps the simplest nontrivial example of a nontrivial ring is given by $\BZ_4 = \{0,1,2,3\}$. We choose a classical self-dual code $C=C^\perp$ 
\begin{align}
    C = \left\{(0), (2) \right\} \, .
\end{align}
The classical code $\CC$ will then be given by the code words
\begin{align}
    \CC &= \left\{\, (\Ba,\Bb) \in \BZ^2_4 \, |\,  \Ba \in C, \, \Bb \in C \right\} \\
    &= \left\{ (0,0), (0,2), (2,0), (2,2) \right\} \ ,
\end{align}
which we can use to construct the lattice $\Lambda(\CC)$ given by (\ref{eq:const-A}):
\begin{align}
    \Lambda(\CC) 
    &:= 
        \bigg\{
        \left( 
        \frac{ \Ba + 4\, m_1}{2}
        ,
        \frac{ \Bb + 4\, m_2}{2}
        \right)
        \,\bigg|\,
        m_1,m_2\in\BZ \,,
        \
        (\Ba,\Bb) \in \CC
        \bigg\}  
    = 
    \BZ^2 \ ,
\end{align}
which is just equivalent to the square lattice $\BZ^2$. From here we can construct the partition function using the complete weight enumerator.

\paragraph{$\BZ_{k^2}$ with $k\in \BZ$.}

Let us consider a classical code generated by $k$
\begin{align}
    C = \{ (0), (k), (2\,k), \cdots, (k^2 - k) \} \ .
\end{align}
The classical code $\CC$ becomes 
\begin{align}
    \CC &= \left\{\, (\Ba,\Bb) \in \BZ^2_{k^2} \, |\,  \Ba \in C, \, \Bb \in C \right\} \ .
\end{align}
The Construction A lattice is
\begin{align}
    \Lambda(\CC) &:= 
    \bigg\{
    \left( 
    \frac{ \Ba + k^2\, l_1}{k}
    ,
    \frac{ \Bb + k^2\, l_2}{k}
    \right)
    \,\bigg|\,
    l_1,l_2\in\BZ \,,
    \
    (\Ba,\Bb) \in \CC\,
    \bigg\} = \BZ^2\ .
\end{align}
We see that for arbitrary $k^2$, the resulting lattices, and thus CFT, are all the same. 

\paragraph{$\BZ_k$ with $k=m_1m_2$.}
We may also construct CFTs from initial classical codes that are not self-dual. Consider a classical code over $\BZ_6$ generated by $3$. 
Then, the dual code with respect to the Euclidean metric is also generated by $2$ and the codewords are
\begin{align}
    C = \{ (0), (3) \}\ , \qquad  
    C^\perp = \{(0), (2), (4) \} \ .
\end{align}
The classical code $\CC$ becomes 
\begin{align}
    \CC &= \left\{\, (0,0), (0,2), (3,0), (3,2), (0,4), (3,4) \right\} \subset \BZ_6^2\ .
\end{align}
The Construction A lattice is
\begin{align}
    \Lambda(\CC) 
        &:= 
            \bigg\{
                \left( 
                \frac{ 3\,l_1}{\sqrt{6}}\ 
                ,
                \frac{ 2\,l_2}{\sqrt{6}}
                \right)
                \;\bigg|\;
                l_1,l_2 \in\BZ \,
            \bigg\} \ .
\end{align}
We see that this is just a primitive rectangular lattice with lengths determined by $C$ and $C^\perp$.
This pattern will appear
for codes over higher $k$ such as $\BZ_{12}$ as well.

For a ring $\BZ_{m_1m_2}$ with two positive integers $m_1$ and $m_2$, we can easily construct a classical code of length one as a subgroup of the cyclic group $\BZ_{m_1m_2}$.
Let $C\subset \BZ_{m_1m_2}$ be the subgroup $\BZ_{m_2}$ of $\BZ_{m_1m_2}$\,:
\begin{align}
    C = \{ \,a\,m_1\in \BZ_{m_1m_2}\mid a\in\BZ_{m_2}\,\}\,.
\end{align}
Then, the dual code is the subgroup $\BZ_{m_1}$ of $\BZ_{m_1m_2}$
\begin{align}
    C^\perp = \{\, b\,m_2\in\BZ_{m_1m_2}\mid b\in\BZ_{m_1}\,\}\,.
\end{align}
Via the CSS construction, we obtain the associated classical code $\CC = C\times C^\perp$
\begin{align}
    \CC &= \left\{\, (a\,m_1,b\,m_2) \,\big|\, a\in \BZ_{m_2},\, b\in \BZ_{m_1}\, \right\} \subset \BZ_{m_1m_2}^2\ .
\end{align}
The Construction A lattice becomes 
\begin{align}
    \Lambda(\CC) 
        = 
            \bigg\{
                \left( 
                \frac{ m_1\,l_1}{\sqrt{m_1m_2}}\ 
                ,
                \frac{ m_2\,l_2}{\sqrt{m_1m_2}}
                \right)
                \,\bigg|\,
                l_1,\,l_2 \in\BZ \,
            \bigg\} \ .
            \label{eq:n1lattice}
\end{align}

Returning to the momentum frame with (\ref{eq:frame-relation}), we see that the  left- and right-moving momenta are given by
\begin{align}\label{Ring_n=1_momenta}
    p_L &= \sqrt{\frac{m_1}{2\,m_2}}\, l_1' + \sqrt{\frac{m_2}{2\,m_1}} \,l_2' \ ,\\ 
    p_R &= \sqrt{\frac{m_1}{2\,m_2}}\, l_1' - \sqrt{\frac{m_2}{2\,m_1}}\, l_2' \ .
\end{align}
For $n=1$, \eqref{Narain_momentum_vectors} simplifies to 
\begin{align}
    p_L 
        &= 
        \frac{l}{\sqrt{G_{11}}\,r} + \frac{\sqrt{G_{11}}\,r}{2}\,w \ , \\
    p_R 
        &= 
        \frac{l}{\sqrt{G_{11}}\,r} - \frac{\sqrt{G_{11}}\,r}{2}\,w \ ,
\end{align}
where $l, w\in \BZ$. 
Comparing with \eqref{Ring_n=1_momenta}, we find the metric to be
\begin{align}
    G_{11} = \frac{1}{r^2}\,\frac{2\, m_2}{m_1} \ .
\end{align}
We thus find the radius of the compact boson $X^{(1)}$ defined by \eqref{radius_dimensionless_boson}:
\begin{align}
\label{eq:radius_n=1}
    R^{(1)} = \sqrt{\frac{2\, m_2}{m_1}} \ .
\end{align}
Taking $m_1$ and $m_2$ to be coprime positive integers, we obtain all rational CFTs with $c=1$ \cite{DiFrancesco:1997nk}.

Additionally, from T-duality, under the exchange of $l_1'$ and $l_2'$ and with $R^{(1)} \rightarrow 2/R^{(1)}$, we find the dual radius to be: 
\begin{align}
    R^{(1)} \rightarrow R'^{(1)} = \frac{2}{R^{(1)}} =  \sqrt{\frac{2\, m_1}{m_2}} \ .
\end{align}
Interestingly we have found, by considering $n=1$ non-self dual codes as a starting point, non-chiral $c=1$ rational CFT constructively.
As expected, if we consider self-dual codes ($m_1 = m_2$), we find the radius to be the self-dual radius $R^{(1)} = \sqrt{2}$. Additionally, this matches with the result that all codes of length one over $\BZ_{k^2}$ give the same CFT.

\subsubsection{Codes over $\BZ_{m_1m_2}$ with $n=2$}
For codes of length $n\ge 2$, there is a possibility of non-trivial $B$ field and background metric, but one can show that $B$ can be chosen to vanish for our Narain code CFTs as follows.

In what follows, we extend our construction for $\BZ_{m_1 m_2}$ codes with $n=1$ to the $n=2$ case.
With a variety of codes, we can consider resultant CFTs with several different metrics.

\paragraph{Case 1.}

Consider the code generated by $(m_1, \, m_2)$\,:
\begin{align}
    \label{eq:n2code}
    C = \left\{\,(a\, m_1, \,a\, m_2)\;\middle|\; a\in \BZ_{m_1m_2}\,\right\}\,,
\end{align} where $m_1$ and $m_2$ are coprime.
Then, the classical code $\CC$ is
\begin{align}
    \CC &= \left\{\, (a\, m_1, \,a\, m_2,\, a'\, m_2, \, b'\, m_1) \;\big|\; a,a',b'\in \BZ_{m_1 m_2}\,\right\} \subset \BZ_{m_1m_2}^{4}\ .
\end{align}
The Construction A lattice becomes 
\begin{align}
    \Lambda(\CC) 
        &= 
            \bigg\{ 
                \frac{ (m_1\,n_1, \, m_2\,n_2, \, m_2\,n'_1, \, m_1\,n'_2)}{\sqrt{m_1m_2}}
                \;\bigg| \;
                n_1,\,n_2,\,n_1',\,n_2' \in\BZ \,
            \bigg\}\ .
\end{align}
Clearly, the generator matrix of this lattice takes the form \eqref{eq:gen_mat_CC} with
\begin{align}
    G_{\Lambda(C)} 
        =
        \begin{bmatrix}
            ~\sqrt{\frac{m_1}{m_2}} & 0 \\
            0 & \sqrt{\frac{m_2}{m_1}} ~
        \end{bmatrix} \ ,
\end{align}
as expected, and the metric can be read off from \eqref{G_B_from_lattice} as 
\begin{align}
    G_{\mu\nu} 
        = 
        \frac{2}{r^2}\, 
            \begin{bmatrix} 
                \frac{m_2}{m_1} & 0 \\
                0 & \frac{m_1}{m_2}
            \end{bmatrix}\ .
\end{align}
Equivalently, using \eqref{eq:gen_tau_rho}, we get
\begin{align}
    \bm{\tau} = \i \,\frac{m_1}{m_2}\,,\qquad \bm{\rho} = \i\,.
\end{align}
Of course, they are in the same imaginary quadratic number field $\BQ(\i)$. Thus, the Narain code CFT is rational.

Since the metric is diagonal, we identify the CFT to be two compact $c=1$ bosons with radii dual to each other:
\begin{align}
     R^{(1)} = \sqrt{\frac{2\,m_2}{m_1}} \, , &\qquad R^{(2)} = \sqrt{\frac{2\,m_1}{m_2}} \, .
\end{align}
Note that considering the case $m_1=m_2$, or in prior words, codes over $k^2$, we obtain two copies of $c=1$ bosons at the self-dual radii $R=\sqrt{2}$. It is clear to see that in this case, we obtain an enhanced symmetry of $\SU(2) \times \SU(2)$ \cite{Dymarsky:2022kwb}.

\paragraph{Case 2.}
Consider the code generated by $(m_1, \, m_1)$.
Then, via CSS construction, we obtain the classical code
\begin{align}
    \CC &= \left\{\, (a\, m_1, \,a\, m_1,\, a', \, b'\, m_2-a') \;\big|\; a,a',b'\in \BZ_{m_1 m_2}\,\right\} \subset \BZ_{m_1m_2}^{4}\ .
\end{align}
The corresponding Construction A lattice is
 \begin{align}
    \Lambda(\CC) 
        &= 
            \bigg\{
                \frac{ (m_1\, n_1 \, ,\, m_1\, (n_1 + m_2 n_2), \, n_1' \, , \, n_2'\, m_2 - n_1')}{\sqrt{m_1m_2}}
                \;\bigg| \;
                n_1,\,n_2,\,n_1',\,n_2' \in\BZ \, 
            \bigg\} \ . \label{Z_m1m2_n=2}
\end{align}

By considering $C\subset \BZ_{m_1m_2}^2$ generated by $(m_1, \, m_1)$, we find the generator matrix of the lattice $\Lambda(C)$
\begin{equation}
    G_{\Lambda(C)} 
        = 
        \frac{1}{\sqrt{m_1m_2}} 
            \begin{bmatrix} 
                ~-m_1m_2 & 0 \\ 
                m_1 & m_1~
            \end{bmatrix}
\end{equation}
which leads to
\begin{equation}
    G_{\mu\nu} 
        = \frac{2}{r^2} 
        \begin{bmatrix} 
            \frac{2}{m_1m_2} & \frac{1}{m_1} \\ 
            \frac{1}{m_1} & \frac{m_2}{m_1} 
        \end{bmatrix} \ .
\end{equation}
In this case, we obtain
\begin{align}
    \bm{\tau} = \frac{m_2}{2} + \i\,\frac{m_2}{2}\,,\qquad \bm{\rho} = \i\,\frac{1}{m_1}\,,
\end{align}
which concludes $\bm{\tau},\bm{\rho} \in \BQ(\i)$.

\paragraph{Case 3.}
Finally, we show from considering the same construction for the code generated by $(1, \, x)$ where $x \in \BZ_{k=m_1 m_2}$ that we obtain the classical code $\CC$ 
\begin{align}
    \CC &= \left\{\, (a\, , \,a\, x, \, -b \, x, \, b) \;\big|\; a,b\in \BZ_{k}\,\right\} \subset \BZ_{k}^{4} \,,
\end{align}
which gives us the lattice
\begin{align}
    \Lambda(\CC) 
        &= 
            \bigg\{
                \frac{ (n_1 \, ,\, n_1\, x - k \,n_2,\, -n_2'\, x + k\, n_1'\, , \, n_2')}{\sqrt{k}}
                \;\bigg| \;
                n_1,\,n_2,\,n_1',\,n_2'\in\BZ \, 
            \bigg\} \, .
\end{align}
We find the generator matrix of $\Lambda(C)$ to be 
\begin{align}
    G_{\Lambda(C)} 
        = 
        \frac{1}{\sqrt{k}} 
        \begin{bmatrix} 
            ~-k~ & 0~ \\ 
            ~x & 1~
        \end{bmatrix} \ ,
\end{align}
and the metric
\begin{align}
    G_{\mu\nu} 
        = 
        \frac{2}{r^2} 
        \begin{bmatrix} 
            ~\frac{1+x^2}{k} & x~ \\ 
            ~x & k ~
        \end{bmatrix} \ .
\end{align}
In this case, the corresponding moduli are
\begin{align}
    \bm{\tau} = \frac{x\,k}{1+x^2} + \i\,\frac{k}{1+x^2}\,,\qquad \bm{\rho} = \i\,.
\end{align}
Hence, $\bm{\tau},\bm{\rho} \in\BQ(\i)$ and the Narain code CFT is rational.

\section{Narain CFTs from quantum stabilizer codes over $\BF_{p^m}$}\label{ss:finite_field}

In this section, we turn to quantum stabilizer codes over finite fields.
We start with a brief review of the fundamentals of finite fields.
Focusing on the CSS-type quantum codes, we give the construction of Narain code CFTs from stabilizer codes over $\BF_{p^m}$.
We find that all Narain code CFTs we construct are rational.
We illustrate our construction with a few examples and determine their metrics on the target space.

\subsection{Review of finite fields}
\label{ss:review_finite}

In this section, we focus on the finite field $\BF_q$ of order $q$. 
It is well known that a field of finite order exists and is unique (up to isomorphism) only when the order is a prime power $q=p^m$, where $p$ is a prime number and $m\in\BN$. 
If $q=p$ is prime, the finite field $\BF_p$ is simply the ring of integers modulo $p$, $\BZ_p=\{0,1,\dots,p-1\}$.

The standard notation for $\BF_4$ is $\BF_4=\{0,1,\omega,\bar{\omega}\}$ with $y+y=0$ for any $y\in\BF_4$, $1+\omega=\bar{\omega}$, $\omega\,\omega=\bar{\omega}$ and $\omega\,\bar{\omega}=1$. Note that $\BF_4$ is distinct from $\BZ_4$ and $\BZ_2\times\BZ_2$ as a ring.

More generally, with a basis $\{e_1,\dots,e_m\}$ of $\BF_{p^m}$ as a vector space over $\BF_p$, an element in $\BF_{p^m}$ can be written as $\sum_{i=1}^m \alpha_i\, e_i,\ \alpha_i\in\BF_p$.
Multiplication such as $e_i\,e_j$ should be fixed consistently so that division can be defined.

One canonical method to construct a finite field is by the polynomial ring. Namely, $\BF_{p^m}$ can be constructed by
\begin{equation}
    \BF_{p^m} = \BF_p[x]/(f(x))\ ,
\end{equation}
where $\BF_p[x]$ is the polynomial ring over $\BF_p$, $f(x)$ is an irreducible polynomial of degree $m$, and $(f(x))$ is the ideal generated by $f(x)$. Although there are several choices for $f(x)$ (except for $\BF_4$), all of them lead to isomorphic results.

For example, when we define $\BF_9$ by $f(x)=x^2+2x+2$, elements are expressed as $\BF_9=\{\alpha_1 x+\alpha_0 \mid \alpha_1,\alpha_0\in\BF_3 \}$ and multiplication is performed as $(x+1)(2x+1)=2x+2f(x)=2x$.

A map $\mathrm{tr}: \BF_{p^m} \to \BF_p$ is called a \emph{trace function} if it satisfies
\begin{align}
    \mathrm{tr}(a+b) &= \mathrm{tr}(a) + \mathrm{tr}(b) \,,
    \\
    \mathrm{tr}(\alpha\, a) &= \alpha \, \mathrm{tr}(a) \,,
\end{align}
for any $a,b\in\BF_{p^m}$ and any $\alpha\in\BF_p$.
For any trace functions $\mathrm{tr}_1, \mathrm{tr}_2$, there exists $k\in\BF_{p^m}$ s.t. $\mathrm{tr}_1(a)=\mathrm{tr}_2(k\,a)$ for all $a\in\BF_{p^m}$.
In particular, the standard trace function
\begin{equation}
    \mathrm{Tr}(a) = \sum_{i=0}^{m-1} a^{p^i}
\end{equation}
satisfies both conditions.

\subsection{Construction of lattices}
As shown in section \ref{ss:review_stabilizer}, we can build a quantum stabilizer code from a classical code $\CC=C_X^\perp \times C_Z^\perp$ s.t. $C_Z^\perp \subset C_X$ where the duality of codes $C_{X/Z}$ is defined with respect to the Euclidean metric. 
In this section, we construct lattices from classical codes over $\BF_{p^m}$ to associate stabilizer codes with Narain CFTs.

We first list our notations in this section. 
Let $\{e_1,...,e_m\}$ be a basis of $\BF_{p^m}$ over $\BF_p$. 
Elements $c\in\BF_{p^m}^N$ and $l\in\BR^{Nm}$ are denoted by
\begin{align}
    c=(c_1,\dots,c_N)\in\BF_{p^m}^N \,,\qquad c_i=\sum_{t=1}^m c_{i,t}\, e_t\in\BF_{p^m} \ ,
\end{align}
for $c_{i,t}\in\BF_p$ and 
\begin{align}
    l=(l_{1,1},\, \dots,\, l_{1,m},\, l_{2,1},\, \dots,\, l_{N,m})\in\BR^{Nm} \ ,
\end{align}
for $l_{i,t}\in\BR$, respectively. 
With the inclusion map $\iota: \BF_p=\{0,1,\dots,p-1\} \to \BZ$, we define a map $r: \BF_{p^m}^N \rightarrow \BR^{Nm}$ as
\begin{equation}
    r: c \mapsto \left(\, \iota(c_{1,1}),\, \dots,\, \iota(c_{1,m}),\iota(c_{2,1}),\, \dots, \, \iota(c_{N,m}) \, \right) \,.
\end{equation}

Let $C\subset \BF_{p^m}^N$ be an additive code. 
The dual code with respect to a bilinear form $\beta:\BF_{p^m}^N\times \BF_{p^m}^N\to \BF_p$ (or $\BF_{p^m}^N\times \BF_{p^m}^N\to \BF_{p^m}$) is defined by
\begin{equation}
    C^{\perp} = \left\{ c'\in \BF_{p^m}^N \;\middle|\;   \beta(c,c')=0, ~ c\in C \right\} \,.
\end{equation}

We define a lattice corresponding to the code by
\begin{equation} \label{eq:CtoL_fpm}
    \Lambda(C) := \left\{ \frac{1}{\sqrt{p}}\,v \in\BR^{Nm} \;\middle|\; v=r(c) \bmod p\,,\ c\in C \right\} \,.
\end{equation}
Note that the map $r$ and the lattice $\Lambda(C)$ depend on the choice of the basis of $\BF_{p^m}$.

The self-duality and evenness of lattices and codes are closely related to each other by the following proposition.

\begin{proposition}
Let $b:\BR^{Nm}\times\BR^{Nm}\to\BR$ and $ \beta:\BF_{p^m}^N\times \BF_{p^m}^N\to \BF_p$ be symmetric bilinear forms and suppose $b$ can be expressed as
\begin{equation}
    b(\lambda,\lambda') = \lambda\, g\,\lambda'^T \ ,
\end{equation}
for a symmetric and unimodular $g\in \mathrm{GL}(Nm,\BR)$.
We write $w_{i,t,j,s}:=g_{(i-1)m+t,(j-1)m+s}$ for simplicity.
Suppose the following condition is satisfied:
\begin{equation} \label{eq:metrics_condition}
    w_{i,t,j,s} = \iota\Bigl(\beta\bigl((0^{i-1},\, e_t,\,0^{N-i}),(0^{j-1},\,e_s,\,0^{N-j})\bigr)\Bigr) \mod p \ ,
\end{equation}
where $0^l$ is the $l$-dimensional vector whose entries are all zero.
Then, the following statements hold:
\begin{itemize}
    \item {\bf (Self-duality)}
    $\Lambda(C)$ is self-dual with respect to $b$ if and only if $C$ is self-dual with respect to $\beta$.
    \item {\bf (Evenness)}
    When $p=2$, $\Lambda(C)$ is even with respect to $b$ if and only if 
    \begin{equation} \label{eq:evenness_p2}
        \sum_{i,t,j,s} \iota(c_{i,t})\,\iota(c_{j,s})\,w_{i,t,j,s} \in 4\BZ
    \end{equation}
    for any $c\in C$.
    When $p$ is odd prime, $\Lambda(C)$ is even with respect to $b$ if and only if 
    \begin{equation} \label{eq:evenness_podd}
        \beta(c,c)=0 
    \end{equation}
    for any $c\in C$ and all diagonal elements of $g$ are even, i.e., $w_{i,t,i,t} \in 2\,\mathbb{Z}$ for any $i,t$.
\end{itemize}
\label{prop:even_sf_finite}
\end{proposition}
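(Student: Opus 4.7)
The plan is to reduce both parts to a single key congruence,
\begin{equation}
    r(c)\, g\, r(c')^T \equiv \iota(\beta(c,c')) \pmod{p}\ ,
\end{equation}
valid for all $c, c' \in \BF_{p^m}^N$. I would derive this by expanding $r(c)\, g\, r(c')^T = \sum_{i,t,j,s} \iota(c_{i,t})\, w_{i,t,j,s}\, \iota(c'_{j,s})$, replacing each $w_{i,t,j,s}$ modulo $p$ using \eqref{eq:metrics_condition}, and collapsing the resulting expression via the $\BF_p$-bilinearity of $\beta$ together with the fact that $\iota$ is a ring homomorphism modulo $p$.

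For self-duality, the sharper statement I would prove is $\Lambda(C)^* = \Lambda(C^\perp)$; the proposition then follows since distinct codes yield distinct lattices. For the inclusion $\Lambda(C^\perp) \subset \Lambda(C)^*$, write $\lambda = \tfrac{1}{\sqrt{p}}(r(c) + p\, m) \in \Lambda(C)$ and $\lambda' = \tfrac{1}{\sqrt{p}}(r(c') + p\, m') \in \Lambda(C^\perp)$, expand $b(\lambda,\lambda') = \tfrac{1}{p}\, r(c)\, g\, r(c')^T$ plus manifestly integer terms, and apply the key congruence. For the reverse inclusion, note that $0 \in C$ gives $\sqrt{p}\, \BZ^{Nm} \subset \Lambda(C)$; pairing $\lambda' \in \Lambda(C)^*$ against these vectors and using that $g$ is unimodular forces $v' := \sqrt{p}\, \lambda' \in \BZ^{Nm}$. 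Grouping the entries of $v' \bmod p$ in the chosen basis defines a candidate $c' \in \BF_{p^m}^N$, and pairing $\lambda'$ against $\tfrac{1}{\sqrt{p}}\, r(c)$ for each $c \in C$ and invoking the key congruence shows $\beta(c,c') = 0$, hence $c' \in C^\perp$ and $\lambda' \in \Lambda(C^\perp)$.

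For evenness, I would start from
\begin{equation}
    b(\lambda,\lambda) = \frac{1}{p}\, r(c)\, g\, r(c)^T + 2\, r(c)\, g\, m^T + p\, m\, g\, m^T \ ,
\end{equation}
with $\lambda = \tfrac{1}{\sqrt{p}}(r(c) + p\, m)$. When $p = 2$, the last two terms are manifestly in $2\BZ$ for every $m$, so the evenness condition collapses directly to \eqref{eq:evenness_p2}. When $p$ is odd, taking $m = 0$ forces $r(c)\, g\, r(c)^T \in 2p\,\BZ$, while varying $m$ over the standard basis vectors of $\BZ^{Nm}$ forces each diagonal entry $w_{i,t,i,t}$ to be even. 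Conversely, given even diagonals, $r(c)\, g\, r(c)^T$ automatically lies in $2\BZ$; since $\gcd(2,p)=1$, the remaining content of $r(c)\, g\, r(c)^T \in 2p\,\BZ$ is the congruence $r(c)\, g\, r(c)^T \equiv 0 \pmod{p}$, which by the key congruence is precisely $\beta(c,c) = 0$, i.e., \eqref{eq:evenness_podd}.

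The main subtlety I anticipate is bookkeeping: one must consistently track whether a statement is an equality in $\BZ$ or a congruence modulo $p$, and verify that conclusions are independent of the choice of integer lift of a codeword. The key congruence sidesteps this issue because its two sides are intrinsically defined on $\BF_{p^m}^N$, isolating the modular ambiguity and reducing everything else to straightforward linear-algebraic manipulations with an integer unimodular matrix.
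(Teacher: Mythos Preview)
Your proposal is correct and follows essentially the same route as the paper's proof. The paper also proves $\Lambda(C)^* = \Lambda(C^\perp)$ via the same two mechanisms (unimodularity of $g$ to force integrality, then the mod-$p$ relation between $r(c)\,g\,r(c')^T$ and $\beta(c,c')$), and handles evenness by the same split of $b(\lambda,\lambda)$ into the $c$-part and the $m$-part; the only difference is that you isolate the key congruence $r(c)\,g\,r(c')^T \equiv \iota(\beta(c,c')) \pmod{p}$ as a lemma upfront, whereas the paper derives it in-line during the self-duality argument and then reuses it for evenness.
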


\bigskip

\noindent
The proof is slightly technical and deferred to Appendix~\ref{app:proof}.

\begin{corollary}
Suppose a basis $\{e_1,...,e_m\}$ of $\BF_{p^m}$ and a trace function $\mathrm{tr}: \BF_{p^m}\to\BF_p$ satisfy
\begin{equation} \label{eq:field_orthonormality}
    \delta_{t,s} = \mathrm{tr}(e_t\,e_s) \,,
\end{equation}
for any $t,s\in\{1,\dots,m\}$.
Let $\xi: \BF_{p^m}^N\times \BF_{p^m}^N\to \BF_{p^m}$ be a bilinear form that can be written as
\begin{equation}
    \xi(c,c')=c\, \tilde{h}\, c'^T\,,\qquad \tilde{h}\in\mathrm{GL}(N,\BF_p)\,,
\end{equation}
where $\tilde h$ satisfies the condition $h=\iota(\tilde{h}) \mod p$ for some symmetric and unimodular matrix $h\in\mathrm{GL}(N,\BR)$.
Then a linear code $C\subset \BF_{p^m}^N$ is self-dual with respect to $\xi$ if and only if $\Lambda(C)$ is self-dual with respect to the metric $h\otimes I_m$.
\end{corollary}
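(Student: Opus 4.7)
The plan is to reduce the corollary to Proposition~\ref{prop:even_sf_finite} by converting the $\BF_{p^m}$-valued form $\xi$ into an $\BF_p$-valued form via the trace, namely $\beta:=\mathrm{tr}\circ\xi:\BF_{p^m}^N\times\BF_{p^m}^N\to\BF_p$. On the lattice side, I take $b(\lambda,\lambda')=\lambda\,(h\otimes I_m)\,\lambda'^T$; since $h$ is symmetric and unimodular, so is $h\otimes I_m$, satisfying the hypothesis of the proposition. So two things need to be checked: (i) the compatibility condition \eqref{eq:metrics_condition} between $b$ and $\beta$, and (ii) that self-duality of $C$ with respect to $\xi$ is the same as self-duality with respect to $\beta$.

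For (i), plugging $c=(0^{i-1},e_t,0^{N-i})$ and $c'=(0^{j-1},e_s,0^{N-j})$ into $\xi$ gives $\xi(c,c')=\tilde{h}_{i,j}\,e_t\,e_s$. Since $\tilde{h}_{i,j}\in\BF_p$, the $\BF_p$-linearity of $\mathrm{tr}$ together with the orthonormality assumption \eqref{eq:field_orthonormality} yields $\beta(c,c')=\tilde{h}_{i,j}\,\delta_{t,s}$, so $\iota(\beta(c,c'))=h_{i,j}\,\delta_{t,s}\bmod p$. On the other hand, the matrix entry $w_{i,t,j,s}$ of $h\otimes I_m$ is exactly $h_{i,j}\,\delta_{t,s}$, so condition \eqref{eq:metrics_condition} holds on the nose. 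Hence Proposition~\ref{prop:even_sf_finite} gives: $\Lambda(C)$ is self-dual w.r.t.\ $b$ iff $C$ is self-dual w.r.t.\ $\beta$.

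The main obstacle is (ii), namely proving $C^{\perp_\xi}=C^{\perp_\beta}$. The inclusion $C^{\perp_\xi}\subset C^{\perp_\beta}$ is immediate from $\beta=\mathrm{tr}\circ\xi$. For the reverse inclusion, suppose $c'\in C^{\perp_\beta}$ and fix $c\in C$. Because $C$ is an $\BF_{p^m}$-linear subspace, $\alpha\,c\in C$ for every $\alpha\in\BF_{p^m}$, and bilinearity of $\xi$ in the $\BF_{p^m}$-module structure gives
\begin{equation}
    \mathrm{tr}\bigl(\alpha\,\xi(c,c')\bigr)=\mathrm{tr}\bigl(\xi(\alpha\,c,c')\bigr)=\beta(\alpha\,c,c')=0\qquad\text{for all }\alpha\in\BF_{p^m}.
\end{equation}
The $\BF_p$-bilinear trace pairing $\BF_{p^m}\times\BF_{p^m}\to\BF_p$, $(x,y)\mapsto\mathrm{tr}(xy)$, is non-degenerate (this is the separability of $\BF_{p^m}/\BF_p$, and is also visible directly from \eqref{eq:field_orthonormality}: the dual basis of $\{e_t\}$ under $\mathrm{tr}(xy)$ is $\{e_t\}$ itself). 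Therefore $\xi(c,c')=0$, giving $c'\in C^{\perp_\xi}$.

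Combining (i) and (ii): $C$ is self-dual w.r.t.\ $\xi$ iff $C$ is self-dual w.r.t.\ $\beta$ iff $\Lambda(C)$ is self-dual w.r.t.\ $h\otimes I_m$, which completes the argument. The only conceptually non-mechanical step is the non-degeneracy of the trace pairing invoked in (ii); everything else is a direct bookkeeping exercise reducing the $\BF_{p^m}$-valued setting to the $\BF_p$-valued setting of Proposition~\ref{prop:even_sf_finite}.
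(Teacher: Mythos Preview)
Your argument is correct and follows the same route as the paper: set $\beta=\mathrm{tr}\circ\xi$ and $g=h\otimes I_m$, verify the compatibility condition \eqref{eq:metrics_condition}, and then invoke Proposition~\ref{prop:even_sf_finite}. Your treatment is in fact more complete than the paper's, which only asserts the implication ``self-duality with respect to $\xi$ implies self-duality with respect to $\mathrm{tr}\circ\xi$''; you supply the reverse inclusion $C^{\perp_\beta}\subset C^{\perp_\xi}$ via the $\BF_{p^m}$-linearity of $C$ and the non-degeneracy of the trace pairing, which is exactly what is needed for the ``if and only if'' as stated.
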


\begin{proof}
In the previous proposition, if we set
\begin{equation}
    b(\lambda,\lambda') = \lambda\, (h\otimes I_m)\, \lambda'^T \,,\qquad
    \beta(c,c') = \mathrm{tr}\left(\xi(c,c')\right)\,,\
\end{equation}
then the condition \eqref{eq:metrics_condition} becomes \eqref{eq:field_orthonormality}. In addition, the self-duality with $\xi(c,c')$ implies the self-duality with $\mathrm{tr}(\xi(c,c'))$.
\end{proof}

To construct non-chiral CFTs, we set $N=2n$ and take the metric to be
\begin{equation}
    h = \begin{bmatrix} 
        ~0~ & I_n~ \\
        ~I_n~ & 0~ 
        \end{bmatrix}\,.
\end{equation}
Then, the bilinear form $\xi$ reduces to the off-diagonal Lorentzian metric.
Let us consider a linear code $C\subset \BF_{p^m}^n$ and define a dual code $C^\perp \subset \BF_{p^m}^n$ with respect to the Euclidean metric as \eqref{eq:dual_code_Euclidean}. Then $\CC = C\times C^\perp \subset \BF_{p^m}^{2n}$ is self-dual with respect to the off-diagonal Lorentzian metric since codewords $c=(a,b), c'=(a',b')\in \CC \, (a,a'\in C,\, b,b'\in C^\perp)$ satisfy
\begin{equation}
    \xi(c,c') = a\cdot b' + a'\cdot b = 0
\end{equation}
and $|\CC|=|C||C^\perp|=p^{mn}$. Therefore, if we can take a basis and a trace function for $\BF_{p^m}$ that satisfy the orthonormality \eqref{eq:field_orthonormality}, the CSS-type stabilizer codes constructed from $\CC = C \times C^\perp$ over $\BF_{p^m}$ always generate self-dual lattices.

Regarding the evenness, for $p=2$, \eqref{eq:evenness_p2} is satisfied since
\begin{equation}
    2 \sum_{i,t} \iota(c_{i,t})\, \iota(c_{i+n,t}) = 2 \sum_{i,t} \iota(a_{i,t})\, \iota(b_{i,t}) \equiv 2\, \iota(\mathrm{tr}(a\cdot b)) \equiv 0 \mod 4 \,.
\end{equation}
For odd prime $p$, all diagonal elements of the off-diagonal Lorentzian metric are even (0) and \eqref{eq:evenness_podd} is satisfied when the code is self-dual. Thus, for any prime $p$, self-dual lattices constructed from the CSS codes are also even, which can yield Narain CFTs.

\subsection{Narain code CFTs}
Once an even self-dual lattice with respect to the off-diagonal Lorentzian metric is obtained, we can construct the corresponding CFT as in the ring case. In other words, the left- and right-moving momenta of the states $\ket{p_L,p_R}$ in the CFT are defined by
\begin{equation}
    \left( \frac{p_L + p_R}{\sqrt{2}}, \frac{p_L - p_R}{\sqrt{2}}\right) \in \Lambda(\CC)\ ,
\end{equation}
and the Hilbert space is built based on \eqref{eq:Hilbert_space}.

The torus partition function of the CFT corresponding to a code $\CC=C\times C^\perp \subset \BF_{p^m}^{2n}$ is
\begin{equation}
    Z_\CC(\tau,\bar{\tau}) = \frac{1}{|\eta(\tau)|^{2nm}} \sum_{(\Ba,\Bb)\,\in\, \CC} \;\sum_{l_1,l_2 \,\in\, \BZ^{nm}} q^{\frac{p}{4} \left(\frac{r(\Ba)+r(\Bb)}{p} + l_1+l_2\right)^2} \bar{q}^{\frac{p}{4} \left(\frac{r(\Ba)-r(\Bb)}{p} + l_1-l_2\right)^2} \, ,
\end{equation}
where $q = e^{2\pi\i\tau}$ with the torus moduli $\tau$.
As in the ring case, we can express the partition functions in terms of the complete enumerator polynomial
\begin{equation}
    W_\CC(\{x_{ab}\}) = \sum_{c \,\in\, \CC} \;\prod_{(a,b)\,\in\, \BF_{p^m} \times \BF_{p^m}} x_{ab}^{\text{wt}_{ab}(c)}\ ,
\end{equation}
where $\text{wt}_{ab}(c)$ is the number of components $(\Ba_i, \Bb_i)\in \BF_{p^m} \times \BF_{p^m}$ equal to $(a, b)$ for a codeword $c\in\CC$\,:
\begin{equation}
    \text{wt}_{ab}(c) = |\,\{i \mid (\Ba_i, \Bb_i)=(a,b),\, c=(\Ba_1,\dots,\Ba_n,\Bb_1,\dots,\Bb_n)\}\,| \,.
\end{equation}
In fact, substituting
\begin{equation}\label{eq:Psi-ab}
\begin{aligned}
    \Psi_{ab} &= \sum_{l_1,l_2\, \in \,\BZ^{m}} q^{\frac{p}{4} \left(\frac{r(a)+r(b)}{p} + l_1+l_2\right)^2} \bar{q}^{\frac{p}{4} \left(\frac{r(a)-r(b)}{p} + l_1-l_2\right)^2} \\
    &= \prod_{s=1}^m \psi_{r(a)_s r(b)_s}\ ,
\end{aligned}
\end{equation}
where $\psi$ is defined in \eqref{eq:psi_ab} ($k=p$), we get
\begin{equation}\label{eq:ZC-WC-Psi}
     Z_\CC(\tau,\bar{\tau}) = \frac{1}{|\eta(\tau)|^{2nm}}\,  W_\CC(\{\Psi_{ab}\}) \,.
\end{equation}
We note that $\Psi_{ab}$ in~(\ref{eq:Psi-ab}) can be decomposed into a finite sum involving~$\Theta_{m,p}(\tau)$ defined in~(\ref{eq:Theta-m-k}) and its complex conjugate, as in~(\ref{eq:psi-ab-Theta}).
Thus the partition function is a finite sum of products of $\U(1)_{2p}$ characters, and the CFT is rational.

\subsection{Examples}

We first consider whether we can take an orthonormal basis satisfying \eqref{eq:field_orthonormality}.

For $\BF_4$, we use the standard notation $\BF_4=\{0,1,\omega,\bar{\omega}\}$. If we take $e_1=\omega,\, e_2=\bar{\omega}$ as the basis and use the standard trace $\Tr(x)=x+x^2$, the orthonormality is satisfied since $\Tr(0)=\Tr(1)=0,\, \Tr(\omega)=\Tr(\bar{\omega})=1$.

For general finite fields $\BF_{p^m}$, it is convenient to use the expression by the polynomial ring over $\BF_p$, i.e., $\BF_{p^m} = \BF_p[x] / (f(x))$. As concrete examples, we consider the case $m=2,3$.

For $m=2$, we take $e_1 = 1,\, e_2=x$ as the basis and use the trace $\mathrm{tr}(a_1\,x+a_0) = a_0$.
An orthonormal basis is given by setting
\begin{align}
     f(x) = x^2 - f_1\, x - 1\ , \qquad f_1\in\BF_p\, , \label{eq:irr_poly_m2}
\end{align}
such that $f(x)$ is irreducible.
For example, $f(x)$ can be chosen as
\begin{align}
    \BF_4&: ~x^2+x+1\ ,\\
    \BF_9&: ~x^2+2x+2\ ,\\
    \BF_{25}&: ~x^2+2x+4 \ .
\end{align}
For $m=3$, we take $e_1 = 1,\, e_2=x,\, e_3=x^2-1$ as the basis and use the trace $\mathrm{tr}(a_2\,x^2+a_1\,x+a_0) = a_2 + a_0$.
An orthonormal basis is given by setting
\begin{align}
    f(x) = x^3 - f_0\, x^2 - 2x + f_0 \ , \qquad f_0\in\BF_p \,,
    \label{eq:irr_poly_m3}
\end{align}
such that $f(x)$ is irreducible.
For example, $f(x)$ can be chosen as
\begin{align}
    \BF_8&: ~x^3+x^2+1\ ,\\
    \BF_{27}&: ~x^3+x^2+x+2\ ,\\
    \BF_{125}&: ~x^3+x^2+3x+4 \ .
\end{align}

For any $\BF_{p^2},\BF_{p^3}$, there exist $f_1, f_0\in\BF_p$ in \eqref{eq:irr_poly_m2}, \eqref{eq:irr_poly_m3} that make $f(x)$ be irreducible.
To show this statement, we use the following proposition:

\begin{proposition}
    Let $f(x;a)\in\BF_p[x]$ be a polynomial of degree 2 or 3 with a parameter $a\in\BF_p$. If the conditions
    \begin{itemize} 
        \item[(I)] there exists $b\in\BF_p$ s.t. $f(b;a)\neq0$ for any $a\in\BF_p$,
        \item[(II)] for any $b\in\BF_p$, there exists at most one $a\in\BF_p$ s.t. $f(b;a)=0$,
    \end{itemize}
    hold, then $f(x;a)\in\BF_p[x]$ is irreducible for some $a\in\BF_p$.
\end{proposition}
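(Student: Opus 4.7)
The essential observation is the classical fact that a polynomial of degree $2$ or $3$ over a field is irreducible if and only if it has no root in that field. With this in hand, the statement reduces to producing some $a \in \BF_p$ such that the equation $f(b;a) = 0$ has no solution $b \in \BF_p$.

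My plan is a short double-counting contradiction applied to the set
\begin{equation*}
    S = \{(a,b) \in \BF_p \times \BF_p : f(b;a) = 0\}.
\end{equation*}
Assume for contradiction that $f(x;a)$ is reducible, hence admits some root in $\BF_p$, for every $a \in \BF_p$. Summing the fibers over $a$ then yields
\begin{equation*}
    |S| = \sum_{a \in \BF_p} |\{b : f(b;a) = 0\}| \ge p.
\end{equation*}
On the other hand, summing the fibers over $b$: hypothesis (II) forces $|\{a : f(b;a) = 0\}| \le 1$ for every $b$, while hypothesis (I) furnishes a distinguished $b^\ast \in \BF_p$ whose fiber is empty. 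Therefore
\begin{equation*}
    |S| = \sum_{b \in \BF_p} |\{a : f(b;a) = 0\}| \le (p - 1)\cdot 1 + 0 = p - 1,
\end{equation*}
contradicting $|S| \ge p$. This contradiction produces an $a \in \BF_p$ for which $f(x;a)$ is rootless in $\BF_p$, and hence irreducible.

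I do not foresee any genuine obstacle. The only step requiring attention is the initial reduction \emph{irreducible} $\Leftrightarrow$ \emph{rootless}, which relies crucially on the degree being $2$ or $3$; in degree $4$ the argument would fail because a product of two irreducible quadratics is reducible yet has no root in $\BF_p$. Everything else is an elementary finite count, so the bookkeeping is the entire content of the proof.
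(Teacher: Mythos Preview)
Your proof is correct and follows essentially the same route as the paper: both argue by contradiction, reduce irreducibility to rootlessness via the degree-$2$-or-$3$ fact, and then obtain a collision from the mismatch between $p$ parameters $a$ and at most $p-1$ available roots $b$ (the paper phrases this as pigeonhole, you as double counting of the incidence set $S$). The arguments are interchangeable.
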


\begin{proof}
    We assume that for any $a\in\BF_p$ the polynomial $f(x;a)\in\BF_p[x]$ is reducible and derives a contradiction. When the degree is 2 or 3, the factorization of $f(x;a)$ must include a degree-one polynomial such as $x-b,\,b\in\BF_p$. Therefore, the assumption is equivalent to $\forall a\in\BF_p,\, \exists b\in\BF_p,\, f(b;a)=0$. If the condition (I) is satisfied, since both $a$ and $b$ can take $p$ different values, there exists at least one $b\in\BF_p$ s.t. $f(b;a_1)=f(b;a_2)=0$ for $a_1\neq a_2\in\BF_p$. However, this contradicts condition (II).
\end{proof}
Using the proposition, we can check that the polynomials \eqref{eq:irr_poly_m2} and \eqref{eq:irr_poly_m3} become irreducible for some $f_1, f_0\in\BF_p$ as follows.

For $f(x;f_1) = x^2 - f_1\, x - 1 \in \BF_p[x]$ with $f_1\in\BF_p$ (\eqref{eq:irr_poly_m2}), $b=0$ satisfies the condition (I) since $f(0;f_1)=-1$. 
The condition (II) is also satisfied since if $f(b;a_1)=f(b;a_2)=0$ for $a_1\neq a_2\in\BF_p$, then $0=f(b;a_1)-f(b;a_2)=(a_2-a_1)\,b$, which can be satisfied only if $b=0$. 
However, $f(0;f_1)=-1$ as we have already seen.

For $f(x;f_0) = x^3 - f_0\, x^2 - 2x + f_0 \in \BF_p[x]$ with $f_0\in\BF_p$ (\eqref{eq:irr_poly_m3}), $b=1$ satisfies the condition (I) since $f(1;f_0)=-1$. 
The condition (II) is also satisfied since if $f(b;a_1)=f(b;a_2)=0$ for $a_1\neq a_2\in\BF_p$, then $0=f(b;a_1)-f(b;a_2)=(a_2-a_1)(b^2-1)$, which can be satisfied only if $b^2=1$. 
However, in this case $0=f(b;a_1)=(b-a_1)(b^2-1)-b=-b$, which is inconsistent.

We are now in a position to examine a few examples using the above expressions.
As in the $\BZ_k$ case, we have $B_{\mu\nu}=0$ as follows from~\eqref{G_B_from_lattice}.

\paragraph{$\BF_4$ with $n=1$ ($c=2$).}

We consider a code $C\subset\BF_4^1$\,:
\begin{equation}
    C = \{0,1,\omega,\bar{\omega}\} \,.
\end{equation}
The dual code is
\begin{equation}
    C^\perp = \{0\}  \,.
\end{equation}
Since the coefficients of the basis $e_1=\omega, e_2=\bar{\omega}$ are
\begin{equation}
    r(0)=(0,0) \,,\quad r(1)=(1,1) \,,\quad r(\omega)=(1,0) \,,\quad r(\bar{\omega})=(0,1) \,,
\end{equation}
the lattice $\Lambda(\CC)$ for the code $\CC=C\times C^\perp \subset \BF_4^2$ is
\begin{equation}
    \Lambda(\CC) = \left\{ \frac{(n_1,\, n_2,\, 2n_1',\, 2n_2')}{\sqrt{2}} \;\middle|\; n_1,\, n_2,\, n_1',\, n_2'\in\BZ \, \right\} \,.
\end{equation}

From the same discussion as in the ring case, we can fix the metric and anti-symmetric tensor for the corresponding Narain code CFT using \eqref{G_B_from_lattice}:
\begin{equation}
    G_{\mu\nu} = \frac{4}{r^2}\, \delta_{\mu\nu}  \,.
\end{equation}
In this case, the complex parameters \eqref{eq:gen_tau_rho} are
\begin{align}
    \bm{\tau} = \i \ , \qquad \bm{\rho} = 2\,\i \ ,
\end{align}
which are in the same imaginary quadratic field, hence the theory is rational.

The complete enumerator polynomial of the code is
\begin{equation}
    W_\CC(\{x_{ab}\}) = x_{00} + x_{10} + x_{\omega0} + x_{\bar{\omega}0} \ ,
\end{equation}
and the partition function of the CFT can be expressed as
\begin{align}
\begin{aligned}
    Z_\CC(\tau,\bar{\tau}) 
        &= 
        \left[ \frac{1}{|\eta(\tau)|^2} \frac{\theta_2\bar{\theta}_2 + \theta_3\bar{\theta}_3 + \theta_4\bar{\theta}_4}{2} \right]^2 \ ,
\end{aligned}
\end{align}
where we used \eqref{eq:psi-ab-theta}. This is the partition function for a pair of Dirac fermions, each of which is gauged by the fermion parity (GSO projection).

\paragraph{$\BF_4$ with $n=2$ ($c=4$).}
We consider a self-dual code $C\subset\BF_4^2$\,:
\begin{equation}
    C = \{(0,0),(1,1),(\omega,\omega),(\bar{\omega},\bar{\omega})\} = C^\perp \,.
\end{equation}
The lattice $\Lambda(\CC)$ for the code $\CC=C\times C \subset \BF_4^4$ is
\begin{equation}
    \Lambda(\CC) = \left\{ \frac{(n_1,\, n_2,\, n_1+2n_3,\, n_2+2n_4,\, n_1',\, n_2',\, n_1'+2n_3',\, n_2'+2n_4')}{\sqrt{2}} \;\middle|\; n_i,\, n_i' \in\BZ \, \right\} \,.
\end{equation}
The metric of the corresponding CFT is
\begin{equation}
    G_{\mu\nu} = \frac{2}{r^2} \begin{bmatrix}
        2&0&-1&0 \\
        0&2&0&-1 \\
        -1&0&1&0 \\
        0&-1&0&1
    \end{bmatrix} 
    \,.
\end{equation}

Using the complete enumerator polynomial of the code 
\begin{align}
    W_\CC(\{x_{ab}\}) 
    = \sum_{a,b\in\BF_4} x_{ab}^2 \,, 
\end{align}
the partition function of the CFT can be fixed as
\begin{align}
    \begin{aligned}
        Z_\CC(\tau,\bar{\tau}) 
            &= 
            \left[ \frac{1}{|\eta(\tau)|^4} \,\frac{\theta_2^2\,\bar{\theta}_2^2 + \theta_3^2\,\bar{\theta}_3^2 + \theta_4^2\,\bar{\theta}_4^2}{2} \right]^2 \,.
    \end{aligned}
\end{align}
This is the partition function for two copies of a pair of Dirac fermions, where each copy is gauged by the diagonal $\BZ_2$ symmetry of the fermion parity (GSO projection).

\paragraph{$\BF_9$ with $n=2$ ($c=4$).}
We consider a self-dual code $C\subset\BF_9^2$ generated by $(1,\, x+1)$, i.e., the code and dual code are
\begin{align}
    \begin{aligned}
        C = C^\perp
            =
            \big\{ &(0,0),(1,x+1),(2,2x+2),(x,2x+1),\\
                & (x+1,2), (x+2,x),(2x,x+2),(2x+1,2x),(2x+2,1)\big\}\,.
    \end{aligned}
\end{align}
For example,
\begin{equation}
    r\left((1,x+1)\right) = (1,0,1,1) \,,\quad r\left((x,2x+1)\right) = (0,1,1,2) \,,
\end{equation}
thus the Construction A lattice $\Lambda(\CC)$ for the code $\CC=C\times C \subset \BF_9^4$ consists of the elements
\begin{equation}
\begin{aligned}
    \frac{1}{\sqrt{3}} (n_1,\, n_2,\, n_1+n_2+3n_3,\, n_1+2n_2+3n_4,\, n_1',\, n_2',\, n_1'+n_2'+3n_3',\, n_1'+2n_2'+3n_4') \,,
\end{aligned}
\end{equation} where $n_i,n_i'\in\BZ$.
The metric of the corresponding CFT is
\begin{equation}
    G_{\mu\nu} = \frac{2}{r^2} \begin{bmatrix}
        3&0&-1&-1 \\
        0&3&-1&-2 \\
        -1&-1&1&1 \\
        -1&-2&1&2
    \end{bmatrix} 
    \,.
\end{equation}

Although we do not write down the complete enumerator polynomial since the number of terms is large, we can express the partition function using it as in the previous examples.

\paragraph{$\BF_8$ with $n=2$ ($c=6$).}
We consider a code $C\subset\BF_8^2$ generated by $(1,\, x)$, i.e., the code is
\begin{align}
    \begin{aligned}
        C 
        = 
        \big\{ &(0,0),(1,x),(x,x^2),(x+1,x^2+x),(x^2,x^2+1), \\
            &(x^2+1,x^2+x+1),(x^2+x,1),(x^2+x+1,x+1)\big\} \,.
    \end{aligned}
\end{align}
The dual code is generated by $(1,\, x^2+x)$.
Note that the basis is $e_1=1,\, e_2=x,\, e_3=x^2+1$ (not $e_3=x^2$) and then for example
\begin{equation}
    r\left((1,x)\right) = (1,0,0,0,1,0) \,,\quad r\left((x,x^2)\right) = (0,1,0,1,0,1) \,.
\end{equation}
The metric of the corresponding CFT is
\begin{equation}
    G_{\mu\nu} = \frac{2}{r^2} \begin{bmatrix}
        2&0&0&0&-1&0 \\
        0&2&0&-1&0&-1 \\
        0&0&2&0&-1&-1 \\
        0&-1&0&1&0&\frac{1}{2} \\
        -1&0&-1&0&\frac{3}{2}&\frac{1}{2} \\
        0&-1&-1&\frac{1}{2}&\frac{1}{2}&\frac{3}{2}
    \end{bmatrix} \,.
\end{equation}

\section{Narain CFTs for CSS codes with non-zero logical qubits}
\label{sec:non-zero-qubits}

In this section, we propose a slightly extended correspondence between quantum stabilizer codes and Narain CFTs, focusing on the CSS codes in the binary~$(\mathbb{F}_2)$ case.
Namely, one associates, to a quantum stabilizer code with non-zero logical qubits, a  set of Narain CFTs.
We first explain the motivations and the general correspondence, and then give concrete examples.

As discussed in the literature and also in this paper, we have a procedure to construct a Narain CFT from a classical code $\mathcal{C}$ self-dual and doubly-even with respect to the off-diagonal Lorentzian inner product $\eta$ defined by~(\ref{eq:lor-inner}).
Although the self-duality of the classical code is natural from the perspective of Narain CFTs because it corresponds to the self-duality of the momentum lattice, it implies that the corresponding quantum code has a one-dimensional code subspace and zero logical qubits.
It is desirable to introduce non-zero logical qubits into the correspondence between quantum codes and Narain CFTs.
This can be achieved as follows.
Suppose that a classical code $\mathcal{C}_0$ is doubly-even and self-orthogonal but not self-dual.
Suppose also that there exists a self-dual doubly-even code  $\mathcal{C}$ such that  $\mathcal{C}_0\subset  \mathcal{C}$.
For a general self-orthogonal~$\mathcal{C}_0$, such~$\mathcal{C}$, even if it exists, may not be unique.
We can consider a correspondence between~$\mathcal{C}_0$ and the set of such~$\mathcal{C}$'s.
In other words, when the assumptions above are satisfied, {\it a doubly-even self-orthogonal code corresponds to a finite set of Narain code CFTs constructed from the doubly-even self-dual codes that contain the self-orthogonal one}.%
\footnote{%
We expect that this extended correspondence will lead to a unified picture when lifted to a three-dimensional set-up involving abelian Chern-Simons theories along the lines of~\cite{Kawabata:2023iss,aharony2023holographic}.  It is straightforward to generalize the correspondence to qudit CSS codes.}

Before going through examples, let us briefly review the check matrix representation of CSS codes in the binary case.
Suppose that a stabilizer group $S$ is generated by $(n-k)$ generators $g(\Bu^{(1)}),\cdots, g(\Bu^{(n-k)})$ where $\Bu^{(i)} = (\Ba^{(i)}\mid \Bb^{(i)})\in \BF_2^{2n}$.
Then, we can encode the stabilizer group $S$ into the check matrix
\begin{align}
\label{eq:check_matrix}
    \textsf{H} = \left[
    \begin{array}{c|c}
    \Ba^{(1)} & \Bb^{(1)} \\
    \Ba^{(2)} & \Bb^{(2)} \\
    \vdots & \vdots \\
    \Ba^{(n-k)} \;& \;\Bb^{(n-k)}
    \end{array}\right]\,.
\end{align}
The commutativity of the stabilizer group $S$ is represented by the orthogonality $\mathsf{H}\,\eta\, \mathsf{H}^T = 0$ mod $2$ with respect to $\eta$.
The correspondence between stabilizer generators and a check matrix establishes an alternative representation of quantum stabilizer codes by orthogonal geometry \cite{Calderbank:1996hm}.

The formulation based on a check matrix is useful to define a CSS code from a pair of classical linear codes.
For a linear code encoding $k$ bits into $n$ bits, a generator matrix $G$ is defined as a $k\times n$ matrix of rank $k$ that generates $C$\,:
\begin{align}
    G = \left\{ \,x\,G\in\BF_2^n\mid x\in \BF_2^k\right\}\,.
\end{align}
On the other hand, a parity check matrix $H$ is an $(n-k)\times n$ matrix of rank $n-k$ satisfying $G\,H^T = 0$ mod 2, which generates the dual code $C^\perp$ with respect to the Euclidean inner product
\begin{align}
    C^\perp = \left\{\,x\,H\in\BF_2^n\;\left|\; x\in\BF_2^{n-k}\right.\right\}\,.
\end{align}
We call a linear code $[n,k]$ type if it encodes $k$ bits into $n$ bits.

Suppose $C_X$ and $C_Z$ are $[n,k_X]$ and $[n,k_Z]$ linear codes satisfying $C_Z^\perp \subset C_X$.
The CSS code can be defined by the check matrix
\begin{align}
\label{eq:CSS}
    \mathsf{H}_{(C_X,\,C_Z)} = 
    \left[
    \begin{array}{c|c}
        H_X\, & 0  \\
        0 &\, H_Z
    \end{array}
    \right],
\end{align}
where the block $H_X$ ($H_Z$) represents the parity check matrix of the binary linear code $C_X$ ($C_Z$).
Due to the condition $C_Z^\perp\subset C_X$, the matrix satisfies the orthogonality relation $\mathsf{H}_{(C_X,\,C_Z)}\,\eta\,\mathsf{H}_{(C_X,\,C_Z)}^T = 0$ mod $2$ and defines a stabilizer group $S$.
Conventionally, a $2^k$-dimensional subspace of $\BC^{2^n}$ with distance $d$ is called an $[[n,k,d]]$ quantum stabilizer code.
We can check that the CSS code is $[[n,k_X+k_Z-n]]$ type where we omit the distance $d$.

From the CSS code, we obtain a classical code
\begin{align}
    \CC_0 = \left\{ x\,\mathsf{H}_{(C_X,\,C_Z)}\in\BF_2^{2n}\mid x\in\BF_2^{2n-k_X-k_Z} \right\} =  C_X^\perp\times C_Z^\perp\,.
\end{align}
As the orthogonality relation is satisfied, the classical code $\CC_0 = C_X^\perp\times C_Z^\perp$ is self-orthogonal with respect to the off-diagonal Lorentzian metric $\eta$. 
If the self-orthogonal code $\CC_0$ can be extended to a finite set of doubly-even self-dual codes $\CC$, then it corresponds to a finite set of Narain code CFTs.
In what follows, we explain the correspondence by taking Shor's code and Steane's code as examples.

\subsection{$[[9,1,3]]$ Shor code}

Shor's code~\cite{PhysRevA.52.R2493} is a CSS code defined by $[9,7]$ and $[9,3]$ linear codes $C_X$ and $C_Z$, respectively, whose parity check matrices are (\cite{Shor-zoo})
\begin{equation}
H_X = 
\left[
\begin{array}{ccc|ccc|ccc}
1&1&1~&~1&1&1~&~0&0&0 \\
\hline
1&1&1~&~0&0&0~&~1&1&1
\end{array}
\right]\,,
\end{equation}
and
\begin{equation}
    H_Z 
        = 
        \left[
            \begin{array}{ccc|ccc|ccc}
                1&1&0~&~0&0&0~&~0&0&0 \\
                0&0&0~&~1&1&0~&~0&0&0 \\
                0&0&0~&~0&0&0~&~1&1&0 \\
                \hline
                1&0&1~&~0&0&0~&~0&0&0 \\
                0&0&0~&~1&0&1~&~0&0&0 \\
                0&0&0~&~0&0&0~&1&0&1
            \end{array}
        \right] \,.
\end{equation}
The logical codewords are
\begin{align}
    |\bar 0\rangle &= 2^{-3/2} (|000\rangle + |111\rangle)^{\otimes 3}  \,, \\
    |\bar 1\rangle &= 2^{-3/2} (|000\rangle - |111\rangle)^{\otimes 3}  \,.
\end{align}
Logical operators can be represented as
\begin{align}
    \bar X & = Z_1\, Z_2\, \cdots\, Z_9 \,, \\
    \bar Z & = X_1\, X_2\,\cdots\, X_9 \,. 
\end{align}
We can introduce new stabilizers of the form ${\rm i}^{ab}\, \bar{X}^a\, \bar{Z}^b$ so that the number of logical qubits becomes zero.
Let $\mathcal{S}$ be a set of new stabilizers. 
There are two
possibilities: 1) $\mathcal{S} =  \mathcal{S}_1 := \{I, \bar{X}\}$ and
2)  $\mathcal{S} = \mathcal{S}_2:= \{I, \bar{Z}\}$.
These correspond to classical codes $\mathcal{C}_i$ ($i=1,2$) generated by
\begin{equation}
\mathsf{H}_i = \left[
\begin{array}{c|c}
 ~H_X \;& 0~ \\
~0&\;H_Z~ \\
~{\bf a}_i&{\bf b}_i~
\end{array}\right]
\end{equation}
with
\begin{equation}
{\bf a}_1 = 
[1^9]
\,, \quad 
{\bf b}_1 = [0^9] \,,
\end{equation}
\begin{equation}
{\bf a}_2 = [0^9]\,, \quad 
{\bf b}_2 = [1^9]
 \,.
\end{equation}
Here the power $x^n$ indicates $n$ repeated entries of $x$.
Thus, Shor's code corresponds to the two
Narain code CFTs constructed from $\mathcal{C}_i$ ($i=1,2$).%
\footnote{%
The choice ${\bf a}_3 = [1^9]$, ${\bf b}_3 = [1^9]$ leads to a self-dual code that is not doubly-even (singly-even) and corresponds to a fermionic CFT.}
We note that the two classical codes give rise to CSS codes.

\subsection{$[[7,1,3]]$ Steane code}

Steane's code \cite{steane1996multiple} is also a CSS code defined by two copies of a $[7,4]$ Hamming code $C_X = C_Z = C$ whose parity check matrix is (\cite{Steane-zoo})
\begin{equation}
H_X=H_Z = H:=
\begin{bmatrix}
~1&0&0&1&0&1&1~ \\
~0&1&0&1&1&0&1~ \\
~0&0&1&0&1&1&1~
\end{bmatrix} \,.
\end{equation}
The logical codewords are
\begin{align}
|\bar 0\rangle &= 2^{-3/2} \Big(  |0000000\rangle +  |1010101\rangle + |0110011\rangle + |1100110\rangle \\ \nonumber
&\qquad\qquad\qquad
  + |0001111\rangle + |1011010\rangle + |0111100\rangle + |1101001\rangle \Big) \,,
\\
|\bar 1\rangle &= X_1\,X_2\,\cdots\, X_7\, |\bar 0\rangle  \,.
\end{align}
Logical operators can be represented as
\begin{align}
\bar X &=  X_1\,X_2\,\cdots\, X_7 \,, \\
\bar Z &= Z_1\,Z_2\,\cdots\, Z_7 \,.
\end{align}
Again, there are two self-dual codes $\mathcal{C}_i$ ($i=1,2$) that contain $\mathcal{C}_0 = C^\perp\times C^\perp$.
These are given by the generator matrices
\begin{equation}
\mathsf{H}_i = 
\begin{bmatrix}
~H & 0~ \\
~0 & H~ \\
~{\bf a}_i & {\bf b}_i ~
\end{bmatrix} \,,
\end{equation}
where
\begin{equation}
{\bf a}_1 = 
[1^7]
\,, \quad 
{\bf b}_1 = [0^7] \,,
\end{equation}
\begin{equation}
{\bf a}_2 = [0^7]\,, \quad 
{\bf b}_2 = [1^7]
 \,.
\end{equation}
Thus, the Steane code corresponds to the two
Narain code CFTs constructed from $\mathcal{C}_i$ ($i=1,2
$).%
\footnote{%
As in the previous example, the choice ${\bf a}_3 = [1^7]$, ${\bf b}_3 = [1^7]$ gives a fermionic CFT. See also \cite{Dixon:1988qd,Gaiotto:2018ypj,Kawabata:2023rlt,Kawabata:2023usr,Kawabata:2023iss,Kawabata:2023nlt,Moore:2023zmv} for constructions of fermionic CFTs from classical and quantum codes.}
Again, the two classical codes give rise to CSS codes.

\section{Discussion}\label{ss:discussion}

In this paper, we have considered the construction of Narain CFTs from quantum stabilizer codes and generalized the previous construction \cite{Dymarsky:2020qom,Kawabata:2022jxt} to rings of integers modulo $k$ and finite fields of order $q = p^m$.
We exploited the general relationship between quantum stabilizer codes and classical codes over finite Frobenius rings to provide Lorentzian even self-dual lattices via Construction A.
Using various examples of quantum codes, we examined our construction of Narain CFTs and identified Narain moduli parameters of the resulting theories.
Also, we proposed an extension of the conventional correspondence between quantum codes and Narain CFTs so that the new correspondence is applicable to quantum codes with non-zero logical qubits.

A recent paper~\cite{Furuta:2023xwl} studied the sufficient conditions on the metric and B-field for a Narain CFT to have the structure of a Narain code CFT associated with a code over a finite field $\mathbb{F}_p$ with $p$ prime or a ring such as $\mathbb{Z}_k$ with $k$ an integer, where the information on the code structure can be read from the character decomposition of the torus partition function.
It would be interesting to do a similar analysis allowing the Narain CFT to correspond to a code over a more general Frobenius algebra including~$\mathbb{F}_{p^m}$ with $m>1$.

This paper has focused on the torus partition function and found out that they can be represented by complete weight enumerators.
Generally, we can place a Narain code CFT on a higher-genus Riemann surface.
In the case of binary stabilizer codes, it is known that a higher-genus partition function of a Narain code CFT can be computed with the help of a higher-genus complete weight enumerator~\cite{Henriksson:2022dnu}.
The generalization to our Narain code CFTs is straightforward and may be useful for performing modular bootstrap.

In section~\ref{sec:non-zero-qubits}, we proposed an extended correspondence between stabilizer codes with a non-zero number of logical qubits and a finite set of Narain code CFTs.
As examples, we considered two famous CSS codes, the Shor code and the Steane code.
It would be interesting to see if these sets of Narain CFTs have nice physical properties.

Motivated by the works~\cite{Maloney:2020nni,Afkhami-Jeddi:2020ezh}, the authors of~\cite{Dymarsky:2020qom,Dymarsky:2020pzc,Angelinos:2022umf,Henriksson:2022dml,Kawabata:2022jxt} computed the spectral gap
for the average over an ensemble of Narain CFTs.
It would be interesting to generalize these results and perform averaging over self-dual codes over a ring such as $\mathbb{Z}_k$ or a finite field $\mathbb{F}_{p^m}$ with $m>1$, and study the holographic dual of averaged Narain code CFTs. There exist constructions of ensemble averages over Narain code CFTs over $\BF_{p^{m=1}}$ and $\BZ_{k=p}$~\cite{Dymarsky:2020bps,Kawabata:2022jxt,aharony2023holographic,Barbar:2023ncl} where the large order limit of the rings were taken and the partition functions were derived, though not for arbitrary $m>1$ and $k\neq p$, as well as the large central charge limit. 
It would be interesting to study the theories that arise when taking different large order vs large central charge limits.
Recently, \cite{aharony2023holographic} averaged over length $n$ Narain code CFTs constructed over $\BZ_k\times\BZ_k$ for prime $k$ and showed it to be dual to level-$k$ $(\U(1)\times \U(1))^n$ Chern-Simons theories on different handlebodies.
Given a way to correctly weight (and count) all the codes in a ring of set length, an ensemble average can be extended to arbitrary $k$ and $m>1$ in a similar manner to \cite{Angelinos:2022umf,Kawabata:2022jxt}.\footnote{
Self-dual codes over $\BZ_k$ are classified for $k\le 24$ and $n\le 9$ \cite{harada2009classification,harada2016classification}.
See also the database of self-dual codes \cite{munemasa_database} over various rings and finite fields for more information.
The mass formula which gives the number of distinct self-orthogonal codes over $\BZ_{p^2}$ for prime $p$ is obtained in \cite{balmaceda2008mass,betty2008mass}, which may be useful to taking ensemble averaging of the associated Narain code CFTs.
}
The authors of~\cite{aharony2023holographic} also noted a relation between toric codes and ``AB" Chern-Simons theories (dual to $c=1$ compact scalar theories), which would be an interesting direction to pursue in the context of Narain code CFTs.

Narain code CFTs have been used to construct a concrete theory with a large spectral gap~\cite{Yahagi:2022idq,Furuta:2022ykh}.
In particular, the work~\cite{Angelinos:2022umf} showed that Narain theories with the largest spectral gap with central charge $c\leq8$, which was identified in~\cite{Afkhami-Jeddi:2020ezh}, can be constructed from codes. 
In our construction, we have a larger number of CSS codes than before, and we can use classical self-dual codes to search for CFTs with a large spectral gap.
It is interesting to find CSS codes that give Narain CFTs with a large spectral gap by our construction.

From \eqref{eq:radius_n=1}, we can see that codes over rings construct a compact boson of self-dual radius.
At self-dual radius, a compact boson has been known to have the extended symmetry $\mathrm{SU}(2)\times \mathrm{SU}(2)$. 
Although Narain code CFTs manifestly have $\mathrm{U}(1)$ symmetry, it is not clear when the symmetry is enhanced.
For the central charge two, we find the $\mathrm{SU}(2)^2\times \mathrm{SU}(2)^2$ enhanced symmetry manifests itself in the CFTs constructed by codes over $\BZ_{k^2}$ generated by \eqref{eq:n2code}. Naturally, a question exists whether we can find a code that corresponds to an $\mathrm{SU}(3)\times \mathrm{SU}(3)$ enhanced symmetry.
Recently, affine symmetry of Narain CFTs was discussed to eliminate fake partition functions of code CFTs~\cite{Dymarsky:2022kwb}.
Their approach would be helpful to read off the extended symmetry of Narain code CFTs.
We note that there is a manifest $\SU(2)^n_1 = \U(1)_2^n$ affine symmetry in the Narain code CFTs of~\cite{Henriksson:2022dml}, where they use the identification between~$(p_L,p_R)$ and $\lambda$ as well as the lattice metric that are different from~\cite{Dymarsky:2020qom} and this paper.
It would be interesting to extend their construction from binary to more general codes.

\acknowledgments
The work of T.\,N. was supported in part by the JSPS Grant-in-Aid for Scientific Research (C) No.19K03863, Grant-in-Aid for Scientific Research (A) No.\,21H04469, and
Grant-in-Aid for Transformative Research Areas (A) ``Extreme Universe''
No.\,21H05182 and No.\,21H05190.
The research of T.\,O. was supported in part by Grant-in-Aid for Transformative Research Areas (A) ``Extreme Universe'' No.\,21H05190.
The work of K.\,K. and S.\,Y. was supported by FoPM, WINGS Program, the University of Tokyo.
The work of K.\,K. was supported by JSPS KAKENHI Grant-in-Aid for JSPS fellows Grant No. 23KJ0436.

\appendix

\section{Proof of Proposition \ref{prop:even_sf_finite}}
\label{app:proof}
In this appendix, we give a proof of Proposition \ref{prop:even_sf_finite}.

\begin{proof}
\textbf{(Self-duality)} First, we want to prove that the dual of the lattice constructed from the code is equal to the lattice constructed from the dual code, i.e., $\Lambda(C)^\ast=\Lambda(C^\perp)$. From the definition of dual lattice and $\Lambda(C)$, for $\lambda'=(u_{1,1},\dots,u_{N,m})/\sqrt{p}\in\BR^{Nm}$,
\begin{subequations}
\begin{align}
    \lambda'\in\Lambda(C)^\ast \Leftrightarrow\;& \forall \lambda\in\Lambda(C),\ b(\lambda,\lambda')\in\BZ \nonumber \\
    \Leftrightarrow\;& \forall c\in C,\, l\in\BZ^{Nm},\ \sum_{i,j=1}^N \sum_{t,s=1}^{m} \left(\frac{1}{p}\,\iota(c_{i,t})+l_{i,t}\right) u_{j,s}\, w_{i,t,j,s} \in\BZ \nonumber \\
    \Leftrightarrow\;& \forall c\in C,\ \sum_{i,t,j,s} \iota(c_{i,t})\, u_{j,s}\, w_{i,t,j,s} \in p\,\BZ \quad\text{and} \label{eq:dual_condition_c} \\
    &\forall l\in\BZ^{Nm},\ \sum_{i,t,j,s} l_{i,t}\, u_{j,s} \,w_{i,t,j,s} \in\BZ \,. \label{eq:dual_condition_l}
\end{align}
\end{subequations}

From the unimodularity of $g$, \eqref{eq:dual_condition_l} is equivalent to $\forall j,s,\ u_{j,s} \in\BZ$ and then
\begin{equation}
    \exists c'\in \BF_{p^m}^N,\, l'\in\BZ^{Nm},\qquad u_{j,s}=\iota(c'_{j,s})+p\,l'_{j,s} \,.
\end{equation}
In this notation, from the condition \eqref{eq:metrics_condition},
\begin{align} \label{eq:products_real_Fpm}
\begin{aligned}
    &\sum_{i,t,j,s}\iota(c_{i,t})\,\iota(c'_{j,s})\,w_{i,t,j,s} \in p\,\BZ \\
    \,\Leftrightarrow\;& \sum_{i,t,j,s}\iota(c_{i,t})\,\iota(c'_{j,s})\,\iota\Bigl(\beta\bigl((0^{i-1},\, e_t,\, 0^{N-i}),(0^{j-1},\, e_s,\, 0^{N-j})\bigr)\Bigr) \in p\,\BZ \\
    \Leftrightarrow\;& \sum_{i,t,j,s} \beta\bigl((0^{i-1},\, c_{i,t}e_t,\, 0^{N-i}),(0^{j-1},\, c'_{j,s}e_s,\, 0^{N-j})\bigr) = \beta(c,c') = 0
\end{aligned}
\end{align}
and thus \eqref{eq:dual_condition_c} becomes $c'\in C^\perp$. Combining these results, we get
\begin{align}
\begin{aligned}
    \lambda'\in\Lambda(C)^\ast
    \Leftrightarrow\;& \exists c'\in C^\perp,\, l'\in\BZ^{Nm},\ \lambda'=\frac{1}{\sqrt{p}} \left(\iota(c'_{1,1})+p\,l'_{1,1},\dots,\iota(c'_{n,m})+p\,l'_{N,m}\right) \\
    \Leftrightarrow\;& \lambda'\in\Lambda(C^\perp) \,,
\end{aligned}
\end{align}
which means that $\Lambda(C)^\ast=\Lambda(C^\perp)$. Therefore, it follows that
\begin{equation}
    \Lambda(C)=\Lambda(C)^\ast \,\Leftrightarrow\, \Lambda(C)=\Lambda(C^\perp) \,\Leftrightarrow\, C=C^\perp \,.
\end{equation}

\textbf{(Evenness)}
From the definition of even lattice,
\begin{align}
\begin{aligned}
    \Lambda(C) \text{ is even}
    \Leftrightarrow\;& \forall \lambda\in\Lambda(C),\quad b(\lambda,\lambda)\in2\BZ \\
    \Leftrightarrow\;& \forall c\in C,\, l\in\BZ^{Nm},\quad \frac{1}{p}\sum_{i,t,j,s}(\iota(c_{i,t})+p\,l_{i,t})(\iota(c_{j,s})+p\,l_{j,s})\,w_{i,t,j,s}\in2\BZ \,.
\end{aligned}
\end{align}
Since $g$ is symmetric and unimodular, $w_{i,t,j,s}=w_{j,s,i,t}\in\BZ$ and thus the sum of the terms from $\iota(c_{i,t})\,l_{j,s}$ and $l_{i,t}\,\iota(c_{j,s})$ is always even. Therefore, the evenness can be divided into two parts as
\begin{subequations}
\begin{align}
    &\forall c\in C,\qquad \frac{1}{p}\sum_{i,t,j,s} \iota(c_{i,t})\,\iota(c_{j,s})\,w_{i,t,j,s}\in2\BZ \quad\text{and} \label{eq:even_condition_c} \\
    &\forall l\in\BZ^{Nm},\qquad p \sum_{i,t,j,s} l_{i,t}\,l_{j,s}\,w_{i,t,j,s} \in 2\mathbb{Z} \,. \label{eq:even_condition_l}
\end{align}
\end{subequations}
For $p=2$, \eqref{eq:even_condition_l} is automatically satisfied and \eqref{eq:even_condition_c} becomes \eqref{eq:evenness_p2} by transposing $1/p$. Note that this cannot be expressed in the language of $\BF_2$ since $0$ and $2 \pmod 4$ are indistinguishable on $\BF_2$. For odd prime $p$, since
\begin{equation} \label{eq:llw}
    \sum_{i,t,j,s} l_{i,t}\,l_{j,s}\,w_{i,t,j,s} = \sum_{(i,t)\neq(j,s)} l_{i,t}\,l_{j,s}\,w_{i,t,j,s} + \sum_{i,t} (l_{i,t})^2 \,w_{i,t,i,t}
\end{equation}
and its first term is even from $w_{i,t,j,s}=w_{j,s,i,t}\in\BZ$, \eqref{eq:even_condition_l} is equivalent to $\forall i,t,\ w_{i,t,i,t}\in2\BZ$. In this case, by \eqref{eq:products_real_Fpm} and the same discussion as \eqref{eq:llw}, \eqref{eq:even_condition_c} becomes \eqref{eq:evenness_podd}.
\end{proof}

\bibliographystyle{JHEP}
\bibliography{QEC_CFT_prime_power}
\end{document}